\theoremstyle{plain}
\newtheorem{conjecture}{Conjecture}
\newtheorem{thm}{Theorem}
\newtheorem{theorem}[thm]{Theorem}
\newtheorem{corollary}{Corollary}
\newtheorem{lemma}{Lemma}
\newtheorem{proposition}{Proposition}
\newtheorem{definition}{Definition}
\newtheorem{claim}{Claim}
\newtheorem*{claim*}{Claim}
\theoremstyle{remark}
\newtheorem{remark}{Remark}
\newcommand{\defeq}{\mathrel{:\mkern-0.25mu=}}
\renewcommand{\norm}[1]{\|#1\|}                     
\newcommand{\set}[1]{\{#1\}}
\def\supp{{\mathrm{supp}}}                               
\DeclarePairedDelimiter\ceil{\lceil}{\rceil}             
\DeclareMathOperator{\Ex}{\mathbb{E}}           
\DeclareMathOperator*{\Exs}{\mathbb{E}}           
\DeclareMathOperator{\Var}{Var}                        
\newcommand{\N}{\mathbb{N}}
\newcommand{\R}{\mathbb{R}}
\newcommand{\F}{\mathbb{F}}
\newcommand{\Z}{\mathbb{Z}}
\newcommand{\cF}{\mathcal F}
\newcommand{\cP}{\mathcal P}
\newcommand{\cX}{\mathcal X}
\newcommand{\chA}{\mathbb{A}}
\def\1{\mathbf{1}} 
\def\0{\mathbf{0}} 
\newcommand{\e}{\mathbf e}
\DeclareMathOperator{\EE}{E}
\DeclareMathOperator{\MM}{M}
\DeclareMathOperator{\tower}{Tower}
\begin{document}

\begin{frontmatter}[classification=text]

\title{Boolean Functions with Small Approximate Spectral Norm} 

\author[tmc]{Tsun-Ming Cheung}
\author[hamed]{Hamed Hatami\thanks{Supported by an NSERC grant}}
\author[rz]{Rosie Zhao}
\author[iz]{Itai Zilberstein}

\begin{abstract}
The sum of the absolute values of the Fourier coefficients of a function $f:\F_2^n \to \R$ is called the spectral norm of $f$.    Green and Sanders' quantitative version of  Cohen's idempotent theorem  states that if the spectral norm of $f:\F_2^n \to \{0,1\}$ is at most $M$, then the support of $f$ belongs to the ring of sets generated by at most $\ell(M)$ cosets,  where $\ell(M)$ is a constant that only depends on $M$.
    
We prove that the above statement can be generalized to \emph{approximate} spectral norms if and only if the support of $f$ and its complement satisfy a certain arithmetic connectivity condition. In particular, our theorem provides a new proof of the quantitative Cohen's theorem for $\F_2^n$.
\end{abstract}
\end{frontmatter}

\section{Introduction} 
Let $G = \F_2^n$ be the Boolean cube, and $\widehat{G} \cong  \F_2^n$ be its Pontryagin dual.  For a character $\chi \in \widehat{G}$, the corresponding Fourier coefficient of a function $f:G \to \R$ is defined as 
$$\widehat{f}(\chi) \defeq \Exs_{x \in G} [f(x) \chi(x)].$$

The sum of the absolute values of the Fourier coefficients is called the \emph{algebra norm} or \emph{spectral norm} of $f$, and is denoted by
$$\norm{f}_A \defeq \norm{\widehat{f}}_1 = \sum_{\chi \in \widehat{G}} |\widehat{f}(\chi)|.$$ 
The term \emph{algebra} norm is explained by the inequality $\|f g\|_A \le \|f\|_A \|g\|_A$. This norm arises naturally in theoretical computer science in connection to learning theory, and it has been studied for several complexity classes of Boolean functions~\cite{MR3620782,MR1247193,MR3246269,10.4230/LIPIcs.CCC.2021.39,10.5555/3135595.3135610,10.1145/3313276.3316319}. These studies are often motivated by the existence of efficient learning algorithms for the classes of Boolean functions that have small algebra norms~\cite{MR1247193}. Furthermore, in recent years, tail bounds in the Fourier $L_1$ norm have also become essential in constructing pseudo-random generators~\cite{v015a010,DBLP:conf/approx/ReingoldSV13,DBLP:conf/focs/ForbesK18}  and separating quantum and classical computation~\cite{10.1145/3313276.3316315,DBLP:conf/focs/Tal20,10.1145/3406325.3451040}.  

For a set $A \subseteq G$, let $\1_A$ denote the indicator function of $A$. The set of Boolean functions that satisfy $\|f\|_A=O(1)$ are fully characterized by an important theorem in harmonic analysis, Cohen's idempotent theorem. The base case of this characterization is described through the following simple proposition that characterizes the set of contractive  Boolean functions, i.e. those with $\|f\|_A \le 1$.

\begin{proposition}[Folklore, see {\cite[Proposition 1.2]{Green_Sanders}}]
\label{thm:Kwada-Ito}
A nonempty set $A \subseteq  G$ satisfies $\|\1_A\|_{A} \le 1$ if and only if $A$ is a coset of a subgroup of $G$, in which case  $\|\1_A\|_A =1$. 
\end{proposition}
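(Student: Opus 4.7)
The ``if'' direction is a direct computation. If $A = x_0 + H$ for a subgroup $H$, I would write $\mathbf{1}_A(x) = \mathbf{1}_H(x - x_0)$ and compute $\widehat{\mathbf{1}_H}(\chi) = \tfrac{|H|}{|G|}\mathbf{1}_{H^\perp}(\chi)$ by character orthogonality. There are $|H^\perp| = |G|/|H|$ nonzero coefficients, each of absolute value $|H|/|G|$, summing to $1$. Translating by $x_0$ only multiplies each coefficient by $\chi(x_0) \in \{\pm 1\}$, which preserves absolute values.

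For the ``only if'' direction, set $f = \mathbf{1}_A$ and $\alpha = \widehat{f}(0) = |A|/|G|$. The plan is to extract rigid structural information by combining three standard facts: Parseval gives $\sum_\chi |\widehat f(\chi)|^2 = \alpha$; the triangle inequality gives $\|\widehat f\|_\infty \le \alpha$; and by hypothesis $\|\widehat f\|_1 \le 1$. Then
\[
\alpha(1-\alpha) = \sum_{\chi \ne 0}|\widehat f(\chi)|^2 \;\le\; \max_{\chi \ne 0}|\widehat f(\chi)| \cdot \sum_{\chi \ne 0}|\widehat f(\chi)| \;\le\; \alpha(1-\alpha).
\]
All inequalities must be equalities, which forces every nonzero Fourier coefficient of $f$ to have the same magnitude $\alpha$. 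Writing $m = 1/\alpha$ (which must therefore be a positive integer), there is a set $\Lambda \subseteq \widehat G$ of size $m$ containing $0$ and signs $\epsilon_\chi \in \{\pm 1\}$ such that $\widehat f(\chi) = \tfrac{1}{m}\epsilon_\chi \mathbf{1}_\Lambda(\chi)$, with $\epsilon_0 = +1$.

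The main obstacle is then leveraging the idempotency $f^2 = f$, equivalently $\widehat f * \widehat f = \widehat f$. For $\chi \in \Lambda$, the convolution identity reads
\[
\sum_{\substack{\psi \in \Lambda \\ \chi - \psi \in \Lambda}} \epsilon_\psi \epsilon_{\chi-\psi} = m\,\epsilon_\chi.
\]
Since each summand lies in $\{\pm 1\}$ and there are at most $|\Lambda| = m$ terms, the sum can only hit $\pm m$ when every $\psi \in \Lambda$ satisfies $\chi - \psi \in \Lambda$, i.e.\ $\chi + \Lambda = \Lambda$. This, together with $0 \in \Lambda$, shows $\Lambda$ is a subgroup of $\widehat G$; moreover each term must equal $\epsilon_\chi$, so $\epsilon_\psi\epsilon_{\chi-\psi} = \epsilon_\chi$, meaning $\epsilon$ is a character of $\Lambda$. (The ``$\chi \notin \Lambda$'' case of the convolution identity is automatic once $\Lambda$ is a subgroup, since no $\psi \in \Lambda$ can then have $\chi - \psi \in \Lambda$.)

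Finally, by Pontryagin duality, every character of the subgroup $\Lambda \le \widehat G$ is of the form $\chi \mapsto \chi(x_0)$ for some $x_0 \in G$. Substituting into the Fourier inversion formula,
\[
f(x) = \sum_{\chi \in \Lambda}\frac{1}{m}\chi(x_0)\chi(x) = \frac{1}{m}\sum_{\chi \in \Lambda}\chi(x+x_0) = \mathbf{1}_{x_0 + \Lambda^\perp}(x),
\]
using the standard character-sum identity. Hence $A = x_0 + \Lambda^\perp$ is a coset, and in this case $\|\mathbf{1}_A\|_A = 1$ by the first paragraph.
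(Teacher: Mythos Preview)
The paper does not actually supply a proof of this proposition; it is stated as folklore with a pointer to \cite[Proposition 1.2]{Green_Sanders}. So there is nothing to compare against, and your argument stands on its own.

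Your proof is correct. The ``if'' direction is the standard computation. For the ``only if'' direction, your equality-case chain
\[
\alpha(1-\alpha)=\sum_{\chi\neq 0}|\widehat f(\chi)|^2 \le \Big(\max_{\chi\neq 0}|\widehat f(\chi)|\Big)\Big(\sum_{\chi\neq 0}|\widehat f(\chi)|\Big)\le \alpha(1-\alpha)
\]
does force both $\max_{\chi\neq 0}|\widehat f(\chi)|=\alpha$ and $\sum_{\chi\neq 0}|\widehat f(\chi)|=1-\alpha$ (handling the degenerate cases $\alpha\in\{0,1\}$ separately), and then each nonzero $|\widehat f(\chi)|$ equals $\alpha$; in particular $|\Lambda|=1/\alpha$ is an integer. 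The convolution identity $\widehat f*\widehat f=\widehat f$ together with the extremal counting argument cleanly gives that $\Lambda$ is a subgroup and $\epsilon$ a character on it, and the Fourier inversion finishes. One small stylistic point: you might state explicitly why $m=1/\alpha$ is an integer (it is the cardinality of $\Lambda$), since $|G|/|A|\in\Z$ is not a priori obvious before that step.
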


It is possible to apply set operations to cosets and construct more sophisticated Boolean functions with algebra norm $O(1)$. Recall that a \emph{ring of sets} on $G$ is a subset of $\mathcal{P}(G)$ that includes $G$, and is closed under complements and (finite) intersections and (finite) unions. We say  $A \subseteq G$ has \emph{coset complexity} at most $\ell$ if it belongs to the ring of sets generated by at most $\ell$ cosets.

It is straightforward to show that  if $A$ has coset complexity at most $\ell$, then $\|\1_A\|_A \le 3^\ell$  (see \cref{lem:BoundA}). The quantitative version of Cohen's idempotent theorem states that the converse is essentially true.

\begin{theorem}[Quantitative Cohen's  theorem for $\F_2^n$ \cite{MR133397,Green_Sanders}] \label{thm:Cohen}
If $A \subseteq  G$ satisfies $\|\1_A\|_{A} \le M$, then $A$ belongs to the ring of sets generated by at most $\ell$ cosets where $\ell=\ell(M)$ depends only on $M$. 
\end{theorem}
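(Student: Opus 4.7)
The plan is to prove this by induction on $M$, with the base case $M \le 1$ given by \cref{thm:Kwada-Ito}: a set with $\|\1_A\|_A \le 1$ is a coset of a subgroup, hence has coset complexity at most $1$. For the inductive step, given $A$ with $\|\1_A\|_A \le M$, the goal is to locate a coset $x+V$ of some subspace $V \le G$, with codimension bounded in terms of $M$ alone, such that either $x+V \subseteq A$ or $(x+V) \cap A = \emptyset$, and moreover replacing $A$ by $A' = A \triangle (x+V)$ yields $\|\1_{A'}\|_A \le M - \eta$ for some $\eta = \eta(M) > 0$. The inductive hypothesis then bounds the coset complexity of $A'$ by $\ell(M-\eta)$, and since $A$ is obtained from $A'$ and the single coset $x+V$ by one Boolean operation inside the ring of sets, $A$ itself has coset complexity at most $\ell(M-\eta)+1$. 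This produces the recursion $\ell(M) \le \ell(M-\eta)+1$, which terminates in finitely many steps.

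The core technical input is a Bogolyubov/Chang-type structural lemma: if $\|\1_A\|_A \le M$ then there is a subspace $V \le G$ of codimension $d(M)$ on which every character in the large spectrum $\Lambda_\alpha \defeq \{\chi : |\widehat{\1_A}(\chi)| \ge \alpha\}$ is trivial. Over $\F_2^n$ one can produce such a $V$ cheaply by taking the annihilator of the span of $\Lambda_\alpha$, whose size is bounded by $M/\alpha$ via Markov applied to $\|\widehat{\1_A}\|_1 \le M$; sharper bounds follow by invoking Chang's theorem, which yields codimension $O(\alpha^{-2} \log(1/\mu(A)))$. Parseval then forces $\1_A$ to be close in $L^2$ to a $V$-invariant function, and a pigeonhole over cosets of $V$ extracts one on which $\1_A$ is genuinely constant. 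The quantitative algebra-norm drop $\eta$ comes from the fact that $\1_{x+V}$ is a unit-norm idempotent whose Fourier support lies in a shifted copy of the annihilator $V^\perp$, so that subtracting $\1_{x+V}$ from $\1_A$ cancels a definite amount of $\ell^1$-mass in the Fourier expansion.

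The main obstacle is the quantitative interplay between $d(M)$ and $\eta(M)$: both must be controlled uniformly in $n$ so that the recursion terminates after a number of steps depending only on $M$. A subtle point is to avoid the degenerate case in which the selected coset $x+V$ has negligible density in $A$ (or negligible density in $A^c$), for which the algebra-norm drop would be tiny; this is handled by an averaging argument that locates a coset of substantial relative density before collapsing it. Modulo these quantitative estimates the induction closes, giving the existence of some $\ell(M)$. Sharper bounds---of the quasi-polynomial or tower type---would require replacing the naive Chang input with the Croot--Sisask almost-periodicity machinery used by Green and Sanders, but for the qualitative statement of \cref{thm:Cohen} the outlined approach suffices.
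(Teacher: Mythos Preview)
Your plan has two real gaps. First, putting $\Lambda_\alpha \subseteq V^\perp$ gives only $\|\1_A - \1_A*\mu_V\|_2^2 \le \alpha M$, so a typical coset of $V$ has small variance; but small variance for a Boolean function means \emph{mostly} $0$ or \emph{mostly} $1$, not exactly constant, and no pigeonhole upgrades this. Second, even granted a coset $x+V \subseteq A$, the claimed drop $\|\1_{A\setminus(x+V)}\|_A \le \|\1_A\|_A - \eta(M)$ is unsupported: that the Fourier support of $\1_{x+V}$ sits in $V^\perp$ says nothing about sign alignment, and the net change $\sum_{\chi \in V^\perp}\bigl(|\widehat{\1_A}(\chi)-2^{-d}\chi(x)|-|\widehat{\1_A}(\chi)|\bigr)$ can be as small as $O(2^{-n})$ or even positive. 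For instance with $A=G\setminus\{\0\}$ one has $\|\1_A\|_A=2-2^{1-n}$, and removing any affine hyperplane contained in $A$ drops the norm by only $2^{1-n}$, despite that coset having relative density about $\tfrac12$ in $A$; so your ``substantial relative density'' safeguard does not rescue the step.

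The Green--Sanders proof (sketched in \S\ref{sec:ProofOutline}) is built precisely to avoid both issues. It strengthens the induction hypothesis to $\epsilon$-almost-integer-valued $f$ rather than Boolean $\1_A$, so that ``close to constant on each coset of $W$'' is already good enough and no exact-constancy coset is needed. And it obtains the norm drop not by peeling off a single coset but via the \emph{exact} orthogonal splitting $\|f\|_A=\|f*\mu_W\|_A+\|f-f*\mu_W\|_A$ from \cref{eq:annihilator}, together with the lower bound $\|f*\mu_W\|_A\ge\tfrac12$ that follows once $f*\mu_W$ is close to a nonzero integer on some coset. The induction is then applied to $f-f*\mu_W$, which is $(2\epsilon+\delta)$-almost-integer-valued, at the price of the error parameter roughly doubling each step. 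Staying Boolean and subtracting individual cosets, as you propose, supplies no mechanism for a uniform $\eta(M)$, so the recursion does not terminate in $O_M(1)$ steps.
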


The term ``idempotent'' essentially refers to the assumption that $f=\1_A$ is Boolean, which is equivalent to $f^2=f$. We should remark that Cohen's original theorem~\cite{MR133397} is concerned with locally compact Abelian groups of infinite size. The quantitative version  of the theorem, which is also applicable to finite groups, is due to Green and Sanders~\cite{MR2456890,Green_Sanders}. We will discuss this in more detail in \cref{sec:History}.

\paragraph{Approximate algebra norm}
Our goal is to extend \cref{thm:Cohen} to the set of Boolean functions with small \emph{approximate algebra norms}. For any error parameter $\epsilon >0$, the $\epsilon$-\emph{approximate algebra norm} of $f:G \to \R$ is defined as 
$$\norm{f}_{A,\epsilon} \defeq \inf\{\norm{g}_A \ : \ \|f-g\|_\infty \le \epsilon\}.$$ 
We remind the reader that despite what the notation might suggest, $\norm{\cdot}_{A,\epsilon}$ is not a norm. We will always assume $\epsilon \in [0,\frac{1}{2})$ as for Boolean functions, the range $\epsilon \ge \frac{1}{2}$ is trivial and uninteresting. 
 
Approximate norms, in general, are important in the theory of computation as they capture various notions of   randomized query and communication complexity. For example, approximate algebra norms are closely related  to the  notions of  randomized parity decision tree complexity and the randomized communication complexity of the so-called {\sc xor}-lifts\footnote{The {\sc xor}-lift of a function $f:\F_2^n \to \R$ is the function $f^\oplus:\F_2^n \times \F_2^n \to \R$ defined as $(x,y) \mapsto f(x+y)$.}. More precisely, the gaps between these  parameters are at most exponential, with no dependencies on the dimension $n$. We refer the reader to~\cite{MR3620782,10.1145/3471469.3471479,Hambardzumyan2021DimensionfreeBA} for more details.  

Boolean functions that have small approximate algebra norms have been studied by  M\'ela~\cite{MR665414} and Host, M\'ela, and Parreau~\cite{MR839692} under the concept of  \emph{$\epsilon$-quasi-idempotents}.  M\'{e}la \cite{MR665414} and Sanders \cite[Theorem 2.5]{MR3991375} proved that the assertion of Cohen's idempotent theorem remains true under the weaker assumption that $\|\1_A\|_{A,\epsilon} \le M$ provided that $M \le c \log \epsilon^{-1}$ for some universal constant $c>0$.\footnote{The logarithm here and throughout the paper, unless otherwise specified, is in base $2$.}   

On the other hand, Hamming balls of radius $1$ show that the requirement  $M \le c \log \epsilon^{-1}$ for some universal constant $c$ is necessary as for $B_k=\{x\in \{0,1\}^k \ : \  \sum_{i=1}^k x_i\leq 1\}$ and $0<\epsilon<\frac{1}{2}$, we have (see \cref{lem:Bk})
$$\|\1_{B_k}\|_A \ge \frac{\sqrt{k}}{2} \qquad  \text{and } \qquad \|\1_{B_k}\|_{A,\epsilon} \le 5 \log \epsilon^{-1}.$$
These bounds show that the functions $\1_{B_k}$ can have arbitrarily large coset complexity, while their approximate algebra norm is uniformly bounded by a constant that depends only on $\epsilon$. Therefore,  any potential extension of  Cohen's theorem to approximate algebra norm needs to overrule the possibility of containing  ``affine copies'' of arbitrarily large $\1_{B_k}$. This is achieved through the notion of affine connectivity.

\begin{definition}[Affine connectivity]
\label{def:connectedness}
We say that a set $A \subseteq G$ is $k$-affine  connected if    $a_0,a_0+a_1,\ldots,a_0+a_k  \in A$ implies that at least one of the following holds: 
\begin{itemize}
    \item The vectors $a_0, a_0+a_1,\ldots, a_0+a_k$ are affinely dependent;
    \item There exists  $T \subseteq \{1,\ldots,k\}$ with $|T| \ge 2$ such that $a_0+\sum_{i \in T} a_i \in A$.
\end{itemize}
\end{definition}

\begin{remark}\label{rem:connectedness}
\cref{def:connectedness} means that no restriction of $A$ to a  $k$-dimensional coset is a copy of $B_k$.  Note also that by the change of variables $b_0=a_0$ and  $b_i=a_0+a_i$ for $i=1,\ldots,k$, one can equivalently define $k$-affine  connectivity as the condition that for every linearly independent $b_0,b_1,\ldots,b_k \in A$, there exists $S\subseteq\set{0,1,\ldots,k}$ such that $\abs{S}>1$ is odd and $\sum_{i \in S} b_i \in A$.
\end{remark}

\paragraph{Our contribution}
We prove that if   $\|\1_A\|_{A,\epsilon}$ is small, then $\|\1_A\|_A$ is small if and only if both $A$ and $A^c$ are $k$-affine connected for a small $k$. 

\begin{theorem}[Main theorem]
\label{thm:main}
For every $k,M \in \N$ and $\epsilon \in [0,\frac{1}{2})$, there exists $\ell=\ell(k,M,\epsilon) \in \N$ such that the following holds. If $A \subseteq  G$ satisfies $\|\1_A\|_{A,\epsilon} \le M$, then  
\begin{enumerate}[label={\normalfont(\roman*)}]
    \item either  $A$ or $A^c$ is not $k$-affine connected, in which case $\|\1_A\|_A \ge \frac{\sqrt{k}}{2}$;
    \item or both $A$ and $A^c$ are $k$-affine connected, in which case $A$ has coset complexity at most $\ell$. In particular, $\|\1_A\|_A \le 3^\ell$.   
\end{enumerate}
\end{theorem}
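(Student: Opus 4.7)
Assume first that $A$ is not $k$-affine connected. By \cref{def:connectedness}, there exist affinely independent $a_0,a_0+a_1,\ldots,a_0+a_k\in A$ such that $a_0+\sum_{i\in T}a_i\notin A$ for every $T\subseteq\{1,\ldots,k\}$ with $|T|\ge 2$. On the $k$-dimensional affine subspace $V=a_0+\mathrm{span}(a_1,\ldots,a_k)$, identifying $V$ with $\F_2^k$ via $a_0+\sum_{i\in T}a_i\mapsto \1_T$, the restriction $\1_A|_V$ is exactly $\1_{B_k}$. A short Fourier computation --- writing $f=\sum_\chi\hat f(\chi)\chi$ and collecting those characters of $G$ which restrict to the same character of $V$ --- shows that restricting to an affine subspace can only decrease the algebra norm. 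Combining with the bound $\|\1_{B_k}\|_A\ge\sqrt{k}/2$ recorded in the introduction yields $\|\1_A\|_A\ge\sqrt{k}/2$. The case where $A^c$ fails to be $k$-affine connected is symmetric, since $\|\1_{A^c}\|_A$ and $\|\1_A\|_A$ differ by at most a constant.

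\textbf{Plan for part (ii).} The strategy is to establish the stronger quantitative statement $\|\1_A\|_A\le F(k,M,\epsilon)$, after which \cref{thm:Cohen} immediately produces a coset decomposition of length $\ell=\ell(F(k,M,\epsilon))$, and the bound $\|\1_A\|_A\le 3^\ell$ then follows from the routine estimate that coset complexity controls algebra norm. Let $g$ realize $\|\1_A\|_{A,\epsilon}$, so $\|g\|_A\le M$ and $\|g-\1_A\|_\infty\le\epsilon$. Fix a threshold $\delta=\delta(k,M,\epsilon)$ to be tuned later and set $L=\{\chi:|\hat g(\chi)|\ge\delta\}$, whence $|L|\le M/\delta$. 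Let $H=\bigcap_{\chi\in L}\ker\chi$, a subgroup of $G$ of index at most $2^{|L|}$. The truncation $g_1=\sum_{\chi\in L}\hat g(\chi)\chi$ is constant on each coset of $H$, and Parseval gives $\|g-g_1\|_2^2\le\delta M$. Thus on most cosets of $H$, the function $\1_A$ is approximately constant, and since $\epsilon<1/2$, it is pinned near either $0$ or $1$.

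\textbf{Upgrading approximate to exact; main obstacle.} The decisive step is to show that, for $\delta$ sufficiently small in terms of $(k,M,\epsilon)$, the set $A$ must be an \emph{exact} union of cosets of $H$, yielding coset complexity at most $[G:H]\le 2^{|L|}$, bounded in terms of $(k,M,\epsilon)$. Suppose for contradiction that some coset $c+H$ intersected both $A$ and $A^c$ nontrivially. One would then try to extract inside $c+H$ an affinely independent tuple $a_0,a_0+a_1,\ldots,a_0+a_k$ in $A$ whose higher sums $a_0+\sum_{i\in T}a_i$ with $|T|\ge 2$ all lie in $A^c$, directly violating $k$-affine connectivity of $A$; the dual configuration, extracted using the $k$-affine connectivity of $A^c$, handles cosets mostly contained in $A$. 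The main obstacle is producing this rigid Hamming-ball configuration from soft averaging data: a ``small symmetric difference from a constant'' per coset does not automatically furnish $k$ affinely independent witnesses with the prescribed pattern on all higher sums. I expect a supersaturation/Ramsey-type counting argument will be required, possibly carried out iteratively along a descending chain of subgroups $H\supseteq H'\supseteq\cdots$ with $\delta$ re-tuned at each stage, so that the global $L_2$ approximation can be converted, one scale at a time, into the exact coset structure demanded by the conclusion.
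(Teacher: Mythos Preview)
Your treatment of part~(i) is correct and matches the paper's one-line argument: a failure of $k$-affine connectivity for $A$ (or $A^c$) exhibits an affine copy of $B_k$, and restriction to a coset cannot increase the algebra norm, so \cref{lem:Bk} gives $\|\1_A\|_A\ge\sqrt{k}/2$.

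Part~(ii), however, has a genuine gap, and the approach as stated cannot be repaired. Your ``decisive step'' asserts that for $\delta=\delta(k,M,\epsilon)$ small enough, $A$ must be an \emph{exact} union of cosets of the subgroup $H=\bigcap_{\chi\in L}\ker\chi$ cut out by the large Fourier coefficients of $g$. This is false. Take $A=V$ a subspace of $G=\F_2^n$ with $\dim V=n/2$, and $g=\1_A$. Then $\|g\|_A=1$, $\|\1_A-g\|_\infty=0$, $A$ is $k$-affine connected for every $k$ (it is a subspace), and a short computation shows $A^c$ is $3$-affine connected. Every nonzero Fourier coefficient of $g$ has absolute value $2^{-n/2}$, so for any constant threshold $\delta$ and $n$ large, $L\subseteq\{0\}$ and hence $H=G$. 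But $A$ is neither $\emptyset$ nor $G$, so it is not a union of cosets of $H$. Your proposed contradiction also fails here: the coset $G$ is ``mostly in $A^c$'', so you would seek $a_0,a_0+a_1,\ldots,a_0+a_k\in A$ whose higher sums lie in $A^c$; but $A=V$ is a subspace, so every such sum remains in $A$ and no $B_k$-pattern exists. The iterative refinement you sketch cannot start, since already at the first step $H=G$ and no progress is made. The underlying issue is that large Fourier coefficients of the approximant $g$ need not detect the relevant coset structure of $A$ at all.

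The paper proceeds very differently. It does not attempt to bound $\|\1_A\|_A$ first and then invoke \cref{thm:Cohen}; rather it proves bounded coset complexity directly via a double induction (\cref{lem:mainLemma}) on a norm parameter $m$ controlling $\|g\|_{\chA}$ and a connectivity parameter $r\le k+1$. Affine connectivity is exploited not to manufacture a $B_k$ configuration, but through a dichotomy: either many odd sums of elements of $A\setminus\cX$ fall back into $A\setminus\cX$, forcing large additive energy and hence (via Balog--Szemer\'edi--Gowers) a dense coset, or many such sums fall into a controlled set $\cX$ of cosets on which one already has a smaller $\|g\|_{\chA}$ by induction. A regularization step (\cref{cor:main}) then produces a subgroup $W$ on whose cosets $A$ is nearly constant \emph{and} $\|g|_{W+c}\|_{\chA(W)}$ drops by a fixed amount; adding $O(1)$ such cosets to $\cX$ reduces $r$ by one. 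This interplay between the two parameters is the mechanism that your single-shot spectral truncation lacks.
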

\begin{remark}
Our proof results in the bound  $\ell(k,M,\epsilon) \le \tower_2\left(O(\frac{Mk}{1-2\epsilon})\right)$, where $\tower_2(m)$ denotes the tower of exponentiation with base $2$ and height $m$.
\end{remark}
\begin{remark}
    \cref{thm:main} implies \cref{thm:Cohen}: If $\|\1_A\|_A \le M$, then by \cref{thm:main}~(i) one deduces that both $A$ and $A^c$ are $O(M^2)$-affine connected, and thus one can apply \cref{thm:main}~(ii) to conclude \cref{thm:Cohen}. However,  the best known upper bound~\cite{MR3991375} for \cref{thm:Cohen} is only  $2^{O(M^{3+o(1)})}$, while this proof results in a tower-type bound.  
\end{remark}

\begin{remark}
\label{exmpl:quadratic}
The $k$-affine connectedness by itself does not imply that $\norm{\cdot}_A$ is small, and thus it is also essential that in \cref{thm:main},  we assume  $\|\1_A\|_{A,\epsilon} \le M$. For example, consider the quadratic function $f:\F_2^{n} \to \F_2$ for even $n$ defined as $f(x)=x_1x_2+x_3x_4+\ldots+x_{n-1}x_n$ where the additions are in $\F_2$. Since $f$ is a quadratic function, it satisfies 
\begin{equation}
\label{eq:Quadratic}
\sum_{S \subseteq [4]} f\left(a_0+\sum_{i \in S}a_i\right) \equiv 0.
\end{equation}
for all $a_0,\ldots,a_4 \in \F_2^n$. It follows from \cref{eq:Quadratic} that $\supp(f)$ and $\supp(f)^c$ are both $4$-affine connected. On the other hand, since $f$ is of high rank, by standard Gauss sum estimates~\cite[Lemma 3.1]{MR2773103} or a direct calculation, one can easily verify that $\|f\|_A=\Theta(2^{n/2})$. 
\end{remark}


\subsection{Historical remarks and the general picture}\label{sec:History}


\cref{thm:Kwada-Ito} is a special case of the Kawada-It\^o theorem ~\cite[Theorem 3]{MR3462}, which dates back to 1940. Kawada and It\^o characterized idempotent \emph{probability measures} on compact groups as the normalized Haar measures of compact subgroups. This theorem  was rediscovered independently by Wendel~\cite{MR67904} in the context of harmonic analysis. Later, Rudin~\cite{MR105593,MR108689}, trying to extend this result to all idempotent measures on locally compact Abelian groups,  showed
that any such measure is concentrated on a compact subgroup. Finally, Cohen~\cite{MR133397}, building on the works of Helson~\cite{Helson1953NoteOH} and Rudin~\cite{MR105593}, obtained a full description of idempotent measures on locally compact Abelian groups. Numerous extensions and refinements  of Cohen's theorem have been discovered since~\cite{MR316975, MR860817,MR2456890,MR2306837,MR2773105, MR4072207,MR4361895}.

To state Cohen's original theorem in full generality, we need a few definitions: Let $G$ be a locally compact group, and let $\widehat{G}$ be its Pontryagin dual (which is also a locally compact Abelian group). Let $\MM(G)$ be the algebra of all bounded regular Borel measures on $G$, where multiplication is defined by convolution. Let $B(\widehat{G})$ denote the  Fourier–Stieltjes algebra of $\widehat{G}$, which is the set of all $\widehat{\mu}:\widehat{G} \to \mathbb{C}$ for all $\mu \in \MM(G)$ endowed with the norm 
$\|\widehat{\mu}\|_{B(\widehat{G})} \defeq \|\mu\| $.  This norm is well-defined since the choice of $\mu$ is unique. If $\widehat{G}$ is a \emph{finite} Abelian group, then $B(\widehat{G})$ is the set of all functions on $\widehat{G}$, and  $\|\cdot\|_{B(\widehat{G})}$ coincides with the algebra norm: $\|f\|_{B(\widehat{G})}=\|f\|_{A}$.

Note that if  $\mu \in \MM(G)$ is idempotent (i.e.  $\mu*\mu=\mu$), then  $\widehat{\mu}^2 =\widehat{\mu}$, so $\widehat{\mu}(\chi) \in \{0,1\}$ for all $\chi \in \widehat{G}$. Hence the problem of characterizing all idempotent measures in $\MM(G)$ is equivalent to finding all subsets  $A \subseteq \widehat{G}$ with  $\1_A \in B(\widehat{G})$.

We say that a set $A \subseteq G$ has \emph{coset complexity} at most $\ell \in \N$ if it belongs to the ring of sets generated by at most $\ell$ \emph{open} cosets. The coset complexity of $A$ is defined to be infinite if no such $\ell$ exists.

\begin{theorem}[Cohen's idempotent theorem~\cite{MR133397}]
\label{thm:cohenGeneral}
Let $G$ be a locally compact Abelian group. A set $A \subseteq G$ satisfies $\1_A \in B(G)$ if and only if the coset complexity of $A$ is finite. 
\end{theorem}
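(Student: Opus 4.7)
The plan is to handle the two directions separately. The ``if'' direction is the more routine one: I would first verify that for any open subgroup $H$ of $G$, the indicator $\1_H$ lies in $B(G)$ with norm one. This follows from Pontryagin duality: since $H$ is open, its annihilator $H^\perp \le \widehat{G}$ is compact, and the normalized Haar measure $\mu_{H^\perp}$ on $H^\perp$ is idempotent with $\widehat{\mu_{H^\perp}} = \1_H$. For an open coset $x+H$, modulating $\mu_{H^\perp}$ by the character $\chi \mapsto \chi(x)$ yields a measure with Fourier transform $\1_{x+H}$, so $\|\1_{x+H}\|_{B(G)} = 1$. Since $B(G)$ is a unital Banach algebra under pointwise multiplication, the identities $\1_{A^c} = 1 - \1_A$, $\1_{A \cap B} = \1_A \1_B$, and $\1_{A \cup B} = \1_A + \1_B - \1_A \1_B$ show that Boolean combinations of finitely many open cosets have indicators in $B(G)$. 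Hence finite coset complexity implies $\1_A \in B(G)$.

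For the harder ``only if'' direction---the core content of Cohen's theorem---I would follow the classical Rudin--Cohen reduction to the finite setting, then invoke Green--Sanders' theorem (Theorem~\ref{thm:Cohen}). The reduction proceeds in two stages. First, using the structure theorem for LCA groups together with Pontryagin duality, one reduces to the case where either $G$ is compact (so $\widehat{G}$ is discrete) or $G$ is discrete. The problem thus becomes: for a discrete Abelian group $\Gamma$, if $A \subseteq \Gamma$ satisfies $\1_A \in B(\Gamma)$, then $A$ has finite coset complexity in $\Gamma$. Second, one writes $\Gamma$ as a directed union of finitely generated subgroups. Since $B$-norms are contractive under restriction to subgroups and passage to finite quotients (via duality with measures), on each finitely generated piece---equivalently, on each finite quotient of a suitable cover---one can apply Theorem~\ref{thm:Cohen} to obtain a uniform coset complexity bound $\ell = \ell(\|\1_A\|_{B(\Gamma)})$. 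A compactness or inverse-limit argument then glues the finite-dimensional coset descriptions into a single finite Boolean combination of open cosets in $\Gamma$, which pulls back to the desired decomposition in $G$.

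The main obstacle is the passage from the uniform but local coset decompositions provided by the finite-group theorem to a single global decomposition. One has to verify that the cosets produced on different finite quotients are compatible, that their count stays bounded independently of the quotient, and that the limiting object is itself a finite Boolean combination of \emph{open} cosets rather than something more complicated. This is exactly where the uniform quantitative bound $\ell(M)$ from Green--Sanders is indispensable: a purely qualitative finite-group statement would not suffice, since one needs the number of cosets controlled by $\|\1_A\|_{B(G)}$ alone and not by any auxiliary parameter that could blow up in the limit. Handling the noncompact, nondiscrete case (e.g.\ $G = \R^n$) additionally requires care with the $\R$-factors, where the only open subgroups are $\R^n$ itself and the structural reduction must be carried out via different archimedean arguments.
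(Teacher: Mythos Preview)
The paper does not prove \cref{thm:cohenGeneral}. It is stated in \cref{sec:History} as Cohen's classical 1960 result and attributed to~\cite{MR133397}; the paper's own contribution (\cref{thm:main}) concerns only the finite group $\F_2^n$. So there is no paper proof to compare against.

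That said, your proposed strategy has a concrete gap. You plan to reduce to finite quotients and then invoke \cref{thm:Cohen}. But \cref{thm:Cohen} as stated in this paper is the Green--Sanders theorem \emph{for $\F_2^n$ only}. Finite quotients of an arbitrary discrete Abelian group $\Gamma$ are general finite Abelian groups, not copies of $\F_2^n$, so \cref{thm:Cohen} cannot be applied there. What you actually need is the quantitative idempotent theorem for all finite (or all locally compact) Abelian groups, which is the content of~\cite{MR2456890}, not \cref{thm:Cohen}. Once you have that general version, the reduction you outline becomes unnecessary: Green and Sanders in~\cite{MR2456890} prove the quantitative statement directly for all LCA groups, and Cohen's qualitative theorem is an immediate corollary.

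Your ``if'' direction is correct and standard. Your outline of the reduction (structure theorem, directed union of finitely generated subgroups, contractivity of $B$-norms under restriction and quotient) is also a legitimate route in principle, and you correctly identify that a \emph{uniform} bound $\ell(M)$ is essential for the gluing step---a point the paper itself emphasizes when explaining why Cohen's original argument gives no information in the finite setting. But the specific citation to \cref{thm:Cohen} does not supply what the argument requires.
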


We refer the interested readers to \cite[Chapter 3]{MR1038803} for more details. Cohen's theorem left open whether the coset complexity of $A$ is \emph{uniformly} bounded from above by a function of $\norm{\1_A}_{B(G)}$.  Moreover, it gave no information for finite groups.  Five decades later,  Green and Sanders~\cite{MR2456890,Green_Sanders}, using modern tools from additive combinatorics,  proved a stronger quantitative version of Cohen's theorem that resolved the uniformity question. Their result can be applied to finite groups as well. It states that  if $\|\1_A\|_{B(G)} \le M$, then the coset complexity of $A$ is at most $\ell(M)$, where $\ell(\cdot)$ is a universal  function that does not depend on the choice of the underlying group $G$.  They first proved the special case of this  theorem for the groups $\F_2^n$, and afterwards generalized it to all locally compact Abelian groups in \cite{MR2456890}. The bounds obtained in these two papers were later improved by Sanders~\cite{MR3991375,MR4072207}.

Cohen's theorem has been generalized to non-Abelian locally compact groups in~\cite{MR316975,MR860817}, and the quantitative version of the non-Abelian idempotent theorem was also established by Sanders~\cite{MR2773105}.

It seems conceivable that with a proper generalization of  the notion of affine connectivity,  \cref{thm:main} can similarly be generalized to  all locally compact Abelian groups, or even all locally compact groups. We defer this to future research.

\subsection{Notation}
For a positive integer $n$, we use $[n]$ to denote $\{1,\ldots,n\}$. We denote the complement of a set $S$ by $S^c$.  We will  use the standard asymptotic notations of $O(\cdot),\Omega(\cdot),\Theta(\cdot),o(\cdot),\omega(\cdot)$. Sometimes, we shall add subscripts to these notations to indicate that the constants involved depend on these parameters. For example, $O_\epsilon(1)$ means bounded from above by a constant that depends only on $\epsilon$.   

For  integers $s>0,t \ge 0$, let $\tower_s(t)$ be defined recursively as $\tower_s(t)=s^{\tower_s(t-1)}$ with the base case $\tower_s(0)=1$. For $s>1$, let $\log_s^*(m)$ be the smallest integer $t \ge 0$ such that $\tower_s(t) \ge m$.

Throughout the article, $G$ always denotes $\F_2^n$.   We consider $G$ as both a group and a vector space over $\F_2$.  
We denote by $\0 \in G$ the zero vector. For $i=1,\ldots,n$, let  $\e_i \in G$ denote the $i$-th standard vector. 

The \emph{additive energy} of a set $A \subseteq G$ is defined as 
\begin{equation}
\label{eq:additiveEn}
\EE(A) = |\{(a_1,a_2,a_3,a_4) \in A^4 \ : \  a_1+a_2=a_3+a_4\}| = |G|^3 \sum_{a \in G} |\widehat{\1_A}(a)|^4.
\end{equation}

For sets $A,B \subseteq G$ and $c \in G$, we define 
$$A+c=\{a+c \ : \ a \in A\},$$
and
$$A+B=\{a+b \ : \ a \in A, \ b \in B\}.$$

 We often identify $G \cong \widehat{G}$ via the bijection that maps $a \in G$ to the character $\chi_a(x)\defeq (-1)^{a^tx}$. We will use the notation $\widehat{f}(a) \defeq \widehat{f}(\chi_a)$. 

The \emph{convolution} of two functions $f_1,f_2:G \to \R$ is defined as 
$$f_1*f_2(x) = \Exs_{y \in G}[f_1(x-y)f_2(y)].$$
For a subgroup $W \subseteq G$, we define $\mu_W:G \to \R$ as $\mu_W:x \mapsto \frac{|G|}{|W|}\1_W(x)$ so that 
$$f*\mu_W(x) = \Exs_{y \in G}[f(x-y)\mu_W(y)]= \Exs_{y \in W+x} [f(y)]=\Exs[f|W+x].$$
The \emph{annihilator} of $W$ is defined as 
$$W^\perp = \{r \in \widehat{G} \ | \ r^ta=0 \  \text{for all} \ a \in W\}. $$
Note that convolution with $\mu_W$ corresponds to the projection of the Fourier expansion to $W^\perp$: 
\begin{equation}
\label{eq:annihilator}
f*\mu_W(x) = \sum_{a \in W^\perp} \widehat{f}(a) \chi_a(x).
\end{equation}

We call a subset of $G$ a \emph{coset} if it is a coset of some subgroup of $G$. For a subgroup $W \subseteq G$, we identify the quotient  space $G/W \equiv W^\perp$. We denote the cosets of $W$ by $W+a$, and whenever such a notation is used, it is always assumed that $W$ is a subgroup and $a \in G$.

For a function $f:G \to \R$, $a\in G$ and a subgroup $W \subseteq G$, we often identify $f|_{W+a}$ with the function on $W$, defined as $w \mapsto f(w+a)$. Note that for $w \in W$, we have 
$$f(w+a)= \sum_{b \in W}\sum_{c \in W^\perp} \widehat{f}(b+c) \chi_{b+c}(w+a)=\sum_{b \in W} \left(\chi_b(a)\sum_{c \in W^\perp} \widehat{f}(b+c)\chi_c(a)\right)\chi_b(w). $$
Hence with this notation 
\begin{equation}
\label{eq:FourierRestriction}
\widehat{f|_{W+a}}(b)=\chi_b(a) \sum_{c \in W^\perp} \widehat{f}(b+c)\chi_c(a) \qquad  \text{for all $b \in W$}.
\end{equation}

Finally, sometimes it will be more convenient to work with a slight variant of the algebra norm that excludes the principal Fourier coefficient. For a function $g:G \to \R$, define 
$$\|g\|_{\chA} \defeq \|g -\Ex[g]\|_A=\sum_{\chi \neq 0} |\widehat{g}(\chi)|.$$
For a function defined on a subgroup $W \subseteq G$, we write ${\chA(W)}$ in the subscript to emphasize the domain of the function.

\subsection{Proof overview}
\label{sec:ProofOutline}
Before giving an overview of the proof of \cref{thm:main}, we discuss Green and Sanders'   proof of \cref{thm:Cohen}. It follows a similar high-level approach as Cohen's proof~\cite{MR133397}, but uses results from additive combinatorics to obtain effective bounds. 

\paragraph{Green and Sanders' proof of \cref{thm:Cohen}} The proof uses  a strong induction which requires generalizing the statement from Boolean functions to \emph{almost integer-valued functions}.  For $\epsilon \in [0,\frac{1}{2})$, a function $f:G \to \R$ is called  $\epsilon$-integer-valued if $\|f-h\|_\infty \le \epsilon$ for an integer-valued function $h$.  

Let $f:G \to \R$ be an $\epsilon$-integer-valued function with $\|f\|_A \le M$. By \cref{eq:annihilator}, for every subgroup $W \subseteq G$, we have  
\begin{equation}
    \norm{f}_A = \norm{f*\mu_W}_A +   \norm{f-f*\mu_W}_A. \label{eq:conv-subgroup}    
\end{equation}
The main step of the proof is establishing the existence of a subgroup $W$ and a small $\delta>0$ such that 
\begin{enumerate}[label={\normalfont(\roman*)}]
\item $\|f*\mu_W - \sum_{i=1}^{c} \pm \1_{W+a_i}\|_\infty \le \epsilon+\delta$ where $c=O_{\delta,M,\epsilon}(1)$; 

\item $\|f*\mu_W\|_A \ge \frac{1}{2}$. 
\end{enumerate}
By~(i) $f*\mu_W$ is approximated by a sum that involves only $O_{\delta,M,\epsilon}(1)$ cosets. On the other hand, since $f$ and $f*\mu_W$ are $\epsilon$- and $(\epsilon+\delta)$-integer-valued respectively, their difference $f-f*\mu_W$ is $(2\epsilon+\delta)$-integer-valued. Moreover, by (ii) and \cref{eq:conv-subgroup}, we have $\norm{f-f*\mu_W}_A \le M-\frac{1}{2}$, and with this decrease in the algebra norm, we can apply the induction hypothesis to $f-f*\mu_W$ to complete the proof.  

\paragraph{The $M = O(\log \epsilon^{-1})$ requirement}  In order to reduce $M$ to $M-\frac{1}{2}$,  the ``error parameter'' is increased from $\epsilon$ to $2 \epsilon+\delta$. Repeating this process inductively for $2M$ steps will decrease the algebra norm to the base case $\|f\|_A \le \frac{1}{2}$. However, for a meaningful approximation, it has to be ensured that the error parameter never exceeds $\frac{1}{2}$.  Since the error parameter is more than doubled at each step, it is essential to require  $\epsilon=2^{-\Omega(M)}$ initially.  

The requirement that $\epsilon=2^{-\Omega(M)}$ is not just an artefact of Cohen's proof.  M\'ela~\cite{MR665414} constructed an example on a certain Abelian group which illustrates that this requirement is necessary. In \cref{lem:Bk} we show that an analogous result holds for $\F_2^n$.

\paragraph{Overview of proof of \cref{thm:main}}  \cref{thm:main} assumes that $A$ and $A^c$ are $k$-affine connected and $\|\1_A\|_{A,\epsilon} \le M$. These two assumptions  suffice to guarantee the existence of a subgroup $W$ with certain desired properties, similar to those used by Green and Sanders:
\begin{itemize}
    \item Affine connectivity implies the existence of a  coset $V+a$ such that $|V+a| \approx |A| \approx |(V+a) \cap A|$. 
    \item The assumption $\|\1_A\|_{A,\epsilon} \le M$ allows us to ``regularize'' $V$ to  a large subgroup $W \subseteq V$ such that every coset of $W$ is  either almost contained in $A$ or has almost no intersection with $A$. 
\end{itemize}

These parts of the proof closely follow Green and Sanders' proof of \cref{thm:Cohen}. 

The primary issue preventing us from further emulating the proof of \cref{thm:Cohen}  is   the fact that  $f - f*\mu_W$ is only $(2\epsilon+\delta)$-integer valued. In our case where $\epsilon$ is the algebra norm approximation parameter $\epsilon$ is taken to be $1/3$ (or any constant), we cannot afford a doubling in the error parameter.
For this reason, we depart from Cohen's approach and instead employ a completely new induction that focuses on $A$'s connectivity.  

Let us reformulate the definition of affine connectivity in a slightly different language.  Let $\cX\defeq \set{\0} \subseteq G$, and $r\defeq k+1$.  By \cref{rem:connectedness}, $A$ is $k$-affine connected if and only if for every $x_1,\ldots,x_r \in A \setminus \cX$ one of the following holds: 

\begin{enumerate}[label={\normalfont(\roman*)}]
    \item  There exists a non-empty set $S \subseteq [r]$ such that $\sum_{i \in S} x_i \in \cX$. 
           
    \item There exists a set $S \subseteq [r]$ such that $|S|>1$ is odd and $\sum_{i \in S} x_i \in A \setminus \cX$.
\end{enumerate}

The proof of \cref{thm:main} is by induction on $r$ and $M$.  Throughout the argument, $\cX$ always remains a union of $O(1)$ cosets. The coset complexity of $A \cap \cX$  can  be shown to be small by applying an induction on $M$ to  $A$'s restrictions to each individual coset in $\cX$.

The main component of the proof is to establish that it suffices to add $O(1)$ cosets to $\cX$ to reduce $r$. We present the details of this double induction in \cref{sec:mainInduction}.

\section{Basic facts} 
In this section, we state a few standard facts that will be used later in the proof of \cref{thm:main}. 
The following lemma shows that if the coset complexity of $A$ is small, then $\1_A$ can be expressed as a $\pm1$-linear combination of indicator functions of a few cosets, and thus $\|\1_A\|_A=O(1)$. 

\begin{lemma}
\label{lem:BoundA}
If $A \subseteq G$ has coset complexity at most $\ell$, then $\1_A$ can be expressed as 
\begin{equation}
\label{eq:pmSum}
\1_A = \sum_{i=1}^{t} \epsilon_i \1_{W_i+a_i}, 
\end{equation}
for cosets $W_i+a_i \subseteq G$, $\epsilon_i \in \{-1,1\}$, and $t \le 3^\ell$. In particular,  $\|\1_A\|_A \le 3^\ell$. 
\end{lemma}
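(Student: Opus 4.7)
My plan is to prove the representation \eqref{eq:pmSum} by induction on the number $\ell$ of generating cosets, and then to obtain the bound $\|\1_A\|_A \le 3^\ell$ immediately from it. In the base case $\ell = 0$ the ring of sets is $\{\emptyset, G\}$: for $A = \emptyset$ I would take the empty signed sum, and for $A = G$ the single term $\1_G$, which is itself an indicator of a coset. In both situations $t \le 1 = 3^0$.

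For the inductive step, suppose the claim is known for coset complexity at most $\ell - 1$, and let $A$ lie in the ring generated by cosets $C_1, \ldots, C_\ell$. My plan is to single out $C_\ell$ and use the fact that a Boolean combination of $\1_{C_1}, \ldots, \1_{C_\ell}$ can be evaluated on $C_\ell$ (respectively $C_\ell^c$) by substituting $1$ (respectively $0$) for $\1_{C_\ell}$; this produces sets $B_1, B_2$ in the ring $\cR_{\ell-1}$ generated by $C_1, \ldots, C_{\ell-1}$ with $A \cap C_\ell = B_1 \cap C_\ell$ and $A \cap C_\ell^c = B_2 \cap C_\ell^c$, and hence the identity
\begin{equation*}
\1_A \;=\; \1_{B_1}\,\1_{C_\ell} + \1_{B_2}\,(1 - \1_{C_\ell}) \;=\; \1_{B_2} + (\1_{B_1} - \1_{B_2})\,\1_{C_\ell}.
\end{equation*}
By the induction hypothesis each of $\1_{B_1}, \1_{B_2}$ is a signed sum of at most $3^{\ell-1}$ coset indicators. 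Multiplying any coset indicator $\1_{W + a}$ by $\1_{C_\ell}$ yields the indicator of the intersection of two cosets, which is either identically $0$ or again the indicator of a coset. Expanding the above identity and collecting terms therefore produces a signed sum of at most $t_2 + (t_1 + t_2) \le 3 \cdot 3^{\ell - 1} = 3^\ell$ coset indicators, completing the induction.

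To finish, I would take absolute values in \eqref{eq:pmSum} and apply the triangle inequality for $\|\cdot\|_A$, using \cref{thm:Kwada-Ito} in the form $\|\1_{W + a}\|_A = 1$ for every coset, to get $\|\1_A\|_A \le t \le 3^\ell$. I do not expect any genuine obstacle in this argument; the only point deserving a sentence of care is producing the auxiliary sets $B_1, B_2 \in \cR_{\ell-1}$, and this reduces to the straightforward bookkeeping of specializing a Boolean formula in the generators to the value of $\1_{C_\ell}$.
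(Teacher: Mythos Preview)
Your argument is correct, and it differs from the paper's in its organization. The paper proceeds directly: it writes $A$ as a union of atoms of the ring generated by $V_1+b_1,\ldots,V_\ell+b_\ell$, notes that the indicator of the atom indexed by $S\subseteq[\ell]$ is $\prod_{i\in S}\1_{V_i+b_i}\prod_{j\notin S}(1-\1_{V_j+b_j})$, and expands the second product to obtain a $\pm1$-combination of at most $2^{\ell-|S|}$ coset indicators; summing over all atoms gives $\sum_{S\subseteq[\ell]}2^{\ell-|S|}=3^\ell$. Your approach instead peels off one generator at a time via the identity $\1_A=\1_{B_2}+(\1_{B_1}-\1_{B_2})\1_{C_\ell}$ and feeds the recursion $T(\ell)\le 3T(\ell-1)$. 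Both routes are elementary and arrive at the same bound; the paper's version makes the binomial identity behind $3^\ell$ explicit in one shot, while yours packages the same count into a clean induction and avoids enumerating atoms altogether.
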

\begin{proof}
Suppose  $A$ belongs to the ring of sets generated by $V_1+b_1, \ldots,V_\ell+b_\ell$. 
Each atom of this ring is of the form 
$$\bigcap_{i \in S} (V_i+b_i) \cap \bigcap_{j \in S^c} (V_j+b_j)^c, $$
for $S \subseteq [\ell]$. 
Notice that $\1_{(V_j+b_j)^c}=1-\1_{V_j+b_j}$, and the intersection of two cosets is a coset. Therefore we can express the indicator function of such an atom as a sum of $\pm1$-linear combination of indicator functions of at most $2^{|S^c|}=2^{\ell-|S|}$ cosets. Summing over all the atoms in $A$, we conclude that $\1_A$ can be expressed as such a sum with the number of terms  at most 
$$\sum_{S \subseteq [\ell]} 2^{\ell-|S|}=(1+2)^\ell=3^\ell.  $$
\end{proof}

Our next goal is to estimate the algebra norm and the approximate algebra norm of the Hamming ball of radius $1$. Our proof of the upper bound on the approximate algebra norm of $\1_{B_k}$ closely follows the argument of M\'ela~\cite{MR665414}.  

We first need to state a simple lemma from approximation theory.  The proof uses Chebyshev's classical characterization of best approximation by polynomials. Let $C([a,b])$ denote the set of all real-valued continuous functions on the interval $[a,b]$ equipped with the $L_\infty$ norm (i.e., supremum of absolute value). A $k$-dimensional subspace $V \subseteq C([a,b])$ is said to satisfy \emph{Chebyshev's condition} if every function in $V$ has at most $k-1$ \emph{distinct} zeros in  $[a,b]$ (See~\cite[Definition 2.9]{Riv90}). 

Chebyshev's classical theorem states that if $V$ satisfies Chebyshev's condition, and $S \subseteq [a,b]$ is a closed set (e.g., finite), then $v_0 \in V$ is the best $L_\infty$-approximation on $S$ of a given $f \in C(S)\setminus V$  if and only if the following holds: there exist points $x_1<\ldots<x_{k+1}$ in $S$ such that  $|f(x_i)-v_0(x_i)|=\|f-v_0\|_{L_\infty([a,b])}$ for all $i$, and $$f(x_i)-v_0(x_i) \qquad \text{for $i=1,\ldots,k+1$},$$ alternate in signs~\cite[Theorem 2.10]{Riv90}.

\begin{lemma}
\label{lem:MelaApprox}
Let $m>1$ be an integer, and let  $\eta_i\defeq \cos\left(\frac{m-i}{2m-1} \pi\right)$ for $i\in [m]$. There exists a function $\sigma:[m] \to \R$ such that
\begin{enumerate}[label={\normalfont(\roman*)}]
    \item $\sum_{i=1}^{m} \eta_i \sigma(i)=1$,
    \item $\sum_{i=1}^{m} \eta_i^{2k-1} \sigma(i)=0$ for every $2 \le k \le m$,
    \item $\sum_{i=1}^{m}  |\sigma(i)| \le 2m-1$.
\end{enumerate}
\end{lemma}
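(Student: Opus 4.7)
The plan is to solve for $\sigma$ via Lagrange interpolation and then to recognize $\sum_i|\sigma(i)|$ as the value at $0$ of a polynomial that turns out to be (up to sign) $T_{2m-1}(x)/x$.

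First, conditions (i) and (ii) together say that, viewed as a linear functional on the $m$-dimensional space $V$ of odd polynomials of degree at most $2m-1$, the map $p\mapsto \sum_i p(\eta_i)\sigma(i)$ extracts the coefficient of $x$. Because the $\eta_i$ are distinct and positive, evaluation at $\eta_1,\ldots,\eta_m$ is an isomorphism $V\to\R^m$ (the associated matrix is Vandermonde in $\eta_i^2$ up to a diagonal factor), so $\sigma$ exists and is unique. Writing $p(x)=x\,\tilde p(x^2)$ with $\deg\tilde p\le m-1$, the Lagrange basis polynomial $p_i\in V$ satisfying $p_i(\eta_j)=\delta_{ij}$ equals $p_i(x)=(x/\eta_i)\,L_i(x^2)$, where $L_i(y)=\prod_{j\neq i}(y-z_j)/(z_i-z_j)$ and $z_j\defeq \eta_j^2$. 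Hence $\sigma(i)=[x]p_i=L_i(0)/\eta_i$, and since $z_1<\cdots<z_m$ are positive we have $\mathrm{sign}(L_i(0))=(-1)^{i-1}$.

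Summing absolute values and using Lagrange interpolation, $\sum_i|\sigma(i)|=\sum_i(-1)^{i-1}L_i(0)/\eta_i=r^*(0)$, where $r^*$ is the unique polynomial of degree $\le m-1$ with $r^*(z_i)=(-1)^{i-1}/\eta_i$. To identify $r^*$, write $T_{2m-1}(x)=xS(x^2)$ with $\deg S=m-1$ (possible since $T_{2m-1}$ is odd). The identity $T_{2m-1}(\cos\theta)=\cos((2m-1)\theta)$ yields $T_{2m-1}(\eta_i)=(-1)^{m-i}$, so $S(z_i)=(-1)^{m-i}/\eta_i$ and hence $(-1)^{m-1}S(z_i)=(-1)^{i-1}/\eta_i$. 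By uniqueness of interpolation, $r^*=(-1)^{m-1}S$, and therefore
\[
\sum_i|\sigma(i)| \;=\; (-1)^{m-1}S(0) \;=\; (-1)^{m-1}T_{2m-1}'(0) \;=\; 2m-1,
\]
using the standard identity $T_{2m-1}'(0)=(-1)^{m-1}(2m-1)$ (equivalently, $T_n'(x)=nU_{n-1}(x)$ combined with $U_{2m-2}(0)=\sin((2m-1)\pi/2)=(-1)^{m-1}$).

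The main obstacle is the sign bookkeeping required to match the interpolation polynomial $r^*$ with a scaled Chebyshev polynomial; once the change of variable $y=x^2$ and the factorization $T_{2m-1}(x)=xS(x^2)$ are in place, every remaining step reduces to standard Chebyshev and Lagrange identities, and the bound in (iii) actually holds with equality.
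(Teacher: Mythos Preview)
Your proof is correct and takes a genuinely different route from the paper's. The paper argues via Chebyshev's best-approximation theorem and LP duality: it shows that the linear span of $x^3,x^5,\ldots,x^{2m-1}$ satisfies the Haar (Chebyshev) condition on $(0,1]$, verifies by the equioscillation criterion that $q(x)=x+\frac{(-1)^{m-1}}{2m-1}T_{2m-1}(x)$ is the best $L_\infty$-approximation to $f(x)=x$ on $\{\eta_1,\ldots,\eta_m\}$ with error $\frac{1}{2m-1}$, and then invokes LP duality to produce $\sigma$ with $\sum|\sigma(i)|\le 2m-1$. Your approach sidesteps both the approximation theorem and duality: you observe that conditions (i)--(ii) determine $\sigma$ uniquely as the coefficient-of-$x$ functional in the Lagrange basis for odd polynomials, compute $\sigma(i)=L_i(0)/\eta_i$ with alternating signs, and then identify the interpolating polynomial $r^*(y)=\sum_i(-1)^{i-1}\eta_i^{-1}L_i(y)$ with $(-1)^{m-1}S(y)$ where $T_{2m-1}(x)=xS(x^2)$, reducing (iii) to the standard identity $T_{2m-1}'(0)=(-1)^{m-1}(2m-1)$. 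The payoff of your approach is that it is entirely elementary and constructive---no black-box approximation theory or duality---and it shows that (iii) holds with \emph{equality}; the paper's approach, on the other hand, exposes the conceptual reason why $2m-1$ appears (it is the reciprocal of the best-approximation error) and would generalize more readily to other systems satisfying the Haar condition.
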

\begin{proof}
Let $V$ be the linear span of  $x^3,x^5,\ldots,x^{2m-1}$ over the reals. Every function in $V$ is an odd polynomial that has 0 as a root of multiplicity at least 3, thus the polynomial has at most   $m-2$ zeros in $(0,\infty)$. Since $\dim(V)=m-1$,  $V$ satisfies  Chebyshev's condition on any interval $[a,b]$ for $0<a<b<\infty$. 

Let $T_{2m-1}(x)=a_1x+a_3x^3+\ldots+a_{2m-1}x^{2m-1}$ be the Chebyshev polynomial of the first kind of degree $2m-1$.  Since $a_1=(-1)^m (2m-1)$ (see \cite[Section 1.1]{Riv90}), the function 
$$q(x)\defeq x+ \frac{(-1)^{m-1}}{2m-1}T_{2m-1}(x)$$
is in $V$. We claim that  $q(x)$ is the best  $L_\infty$-approximation on $S=\{\eta_1,\ldots,\eta_m\}$ of $f(x)=x$ by functions in $V$. 
By the trigonometric definition of the Chebyshev polynomial, it can be seen that $0<\eta_1<\ldots<\eta_{m}=1$ are the extrema points of $T_{2m-1}$ and the signs of $f(x)-q(x)=\frac{(-1)^{m}}{2m-1}T_{2m-1}(x)$ on these points alternate.
Hence, we can apply  Chebyshev's theorem and conclude that  $q(x)$ is the best   $L_\infty$-approximation of $f(x)=x$ on $\{\eta_1,\ldots,\eta_m\}$  by functions in $V$. Since 
$$\max_{i \in [m]}|q(\eta_i)-\eta_i| = \frac{1}{2m-1}\max_{i \in [m]}|T_{2m-1}(\eta_i)|=\frac{1}{2m-1},$$
we conclude that
\[\min_{c_3,c_5,\ldots,c_{2m-1}\in\R}~\max_{i \in [m]} \left|c_3\eta_i^3+c_5\eta_i^5+\ldots + c_{2m-1} \eta_i^{2m-1} - \eta_i\right| = \frac{1}{2m-1}.\]
Hence, by linear programming duality, the solution to the following optimization problem is $\frac{1}{2m-1}$, which after normalization yields the desired $\sigma$.
\[\begin{aligned}
    &\text{max} &&\sum_{i=1}^m \sigma(i) \eta_i\\
    &\text{subject to} && \sum^m_{i=1} \sigma(i)\eta_i^{2k-1}=0 \text{ for every }2\leq k\leq m\\
    &&& \sum^m_{i=1} \abs{\sigma(i)}\leq 1.
\end{aligned}\]
\end{proof}

In the next lemma, we show a separation in the algebra norm and the approximate algebra norm of the Hamming ball of radius $1$. We show that the algebra norm tends to infinity as $k$ grows, while the approximate algebra norm remains bounded by a constant that  depends logarithmically on $\epsilon$. 

\begin{lemma}
\label{lem:Bk}
Let $B_k \subseteq \{0,1\}^k$ be the set of all $x \in \{0,1\}^k$ with  $\sum_{i=1}^k x_i\leq 1$, and let $\epsilon \in (0,\frac{1}{2})$. We have 
$$\frac{\sqrt{k}}{2} \le \|\1_{B_k}\|_A \le \sqrt{k+1} \qquad  \text{and } \qquad \|\1_{B_k}\|_{A,\epsilon} \le 5 \log \epsilon^{-1}.$$
\end{lemma}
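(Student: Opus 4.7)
The plan divides along the three inequalities in the statement. For the two exact bounds on $\norm{\1_{B_k}}_A$, the starting point is the Fourier expansion
\[
\widehat{\1_{B_k}}(a) \;=\; \frac{k+1-2|a|}{2^k},
\]
where $|a|$ denotes the Hamming weight of $a\in\F_2^k$; this follows by writing $\1_{B_k}=\delta_{\0}+\sum_{i=1}^k \delta_{\e_i}$ and noting that each point mass has Fourier coefficients of magnitude $2^{-k}$. The upper bound $\norm{\1_{B_k}}_A \le \sqrt{k+1}$ is then immediate from Cauchy--Schwarz and Parseval: $\norm{\widehat{\1_{B_k}}}_1 \le \sqrt{2^k}\,\norm{\widehat{\1_{B_k}}}_2 = \sqrt{|B_k|} = \sqrt{k+1}$.

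For the lower bound $\norm{\1_{B_k}}_A \ge \sqrt k/2$, plugging the Fourier formula into the definition gives
\[
\norm{\1_{B_k}}_A \;=\; \frac{1}{2^k}\sum_{j=0}^k\binom{k}{j}\,|k+1-2j|.
\]
I would use the identity $\sum_j\binom{k}{j}(k+1-2j)=2^k$ together with $\binom{2m-1}{m-1}=\binom{2m}{m}/2$ and simplify, after splitting the sum according to the sign of $k+1-2j$, to the closed form $(2m+1)\binom{2m}{m}/2^{2m}$ with $m=\lfloor k/2\rfloor$. The standard inequality $\binom{2m}{m}\ge 2^{2m}/(2\sqrt m)$ then yields $\norm{\1_{B_k}}_A \ge (2m+1)/(2\sqrt m)\ge\sqrt{k/2}>\sqrt k/2$ for $k\ge 2$, with the case $k=1$ handled directly.

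For the approximate bound $\norm{\1_{B_k}}_{A,\epsilon}\le 5\log\epsilon^{-1}$, I would choose $m$ proportional to $\log\epsilon^{-1}$ with constant tuned so that $2m-1\le 5\log\epsilon^{-1}$, and invoke \cref{lem:MelaApprox} to obtain nodes $\eta_1,\ldots,\eta_m\in(0,1]$ and weights $\sigma(1),\ldots,\sigma(m)$ satisfying $\sum_i|\sigma(i)|\le 2m-1$. The approximation is $g(y)=\sum_{i=1}^m \sigma(i)F_i(y)$, where the building blocks are Riesz-type products
\[
F_i(y) \;=\; \prod_{j=1}^k \frac{1+\eta_i y_j}{1+\eta_i},
\]
each of algebra norm at most $1$ (since each of the $k$ factors has algebra norm $1$ for $\eta_i\in[0,1]$); in particular, the endpoint $\eta_m=1$ makes $F_m$ equal to $\delta_{\0}$, thereby contributing the $\delta_{\0}$ part of $\1_{B_k}$ exactly. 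The moment cancellations $\sum_i\sigma(i)\eta_i^{2j-1}=\delta_{j,1}$ for $1\le j\le m$ from \cref{lem:MelaApprox} are then used to match $\1_{B_k}$ at the first $m-1$ odd Fourier levels, and the residual high-level tail is controlled by $\sum_i|\sigma(i)|\cdot\max_{i<m}\eta_i^{2m+1}\le\epsilon$ for $m$ as chosen. Together with $\norm{g}_A\le\sum_i|\sigma(i)|\le 2m-1\le 5\log\epsilon^{-1}$, this yields the claimed estimate.

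The main obstacle is the third part: pinning down the exact form of the building blocks (possibly with twists by the parity character $\chi_{[k]}(y)=\prod_j y_j$, which also has algebra norm~$1$, to cleanly separate the parity-even part $\delta_{\0}$ and the parity-odd part $\1_{|x|=1}$ of $\1_{B_k}$) so that the Mela moment cancellations align term by term with the Fourier structure of $\1_{B_k}$. Once the correct combination is identified, both the sup-norm approximation $\norm{g-\1_{B_k}}_\infty\le\epsilon$ and the algebra-norm bound follow by direct estimation.
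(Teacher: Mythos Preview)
Your treatment of the two exact bounds on $\norm{\1_{B_k}}_A$ is fine. The upper bound via Cauchy--Schwarz and Parseval is exactly the paper's argument. For the lower bound you take a different route: the paper writes $\norm{\1_{B_k}}_A=\Ex_z\bigl|1+\sum_i z_i\bigr|$, drops the~$1$ by symmetry of $\sum_i z_i$, and applies Khintchine's inequality; you instead evaluate the sum to the closed form $(2m+1)\binom{2m}{m}2^{-2m}$ with $m=\lfloor k/2\rfloor$ and bound it below via $\binom{2m}{m}\ge 4^m/(2\sqrt m)$. Both arguments are short and correct; yours even yields the slightly sharper constant $\sqrt{k/2}$.

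The third part, however, has a genuine gap. With $y_j=(-1)^{x_j}$ your building block is
\[
F_i(x)=\prod_{j=1}^k\frac{1+\eta_i(-1)^{x_j}}{1+\eta_i}=\rho_i^{\,|x|},\qquad \rho_i=\frac{1-\eta_i}{1+\eta_i},
\]
so that $g(x)=\sum_i\sigma(i)\rho_i^{\,|x|}$. The moment conditions of \cref{lem:MelaApprox} are stated for powers of $\eta_i$, not of $\rho_i$, so after the $(1+\eta_i)^{-k}$ normalization they no longer say anything about $g(x)$; in particular $g(\0)=\sum_i\sigma(i)$ and the values of $g$ on even weights $|x|\ge 2$ are completely uncontrolled. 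More fatally, your tail estimate $\sum_i|\sigma(i)|\cdot\max_{i<m}\eta_i^{\,2m+1}\le\epsilon$ is false: since $\eta_{m-1}=\cos\!\bigl(\pi/(2m-1)\bigr)=1-O(m^{-2})$, one has $\eta_{m-1}^{\,2m+1}\to 1$ as $m\to\infty$. The Chebyshev nodes crowd against~$1$ and provide no decay on their own.

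The paper repairs both issues with two devices. First, it antisymmetrizes: with $g_s(x)=s^{|x|}$ (which has $\norm{g_s}_A=1$ for $s\in[-1,1]$), the function $h_s=(g_s-g_{-s})/2$ vanishes on all even weights, disposing of $|x|=0$ and the even-weight problem in one stroke; your parity-character twist is exactly this step. Second---and this is the ingredient you are missing---it rescales the nodes by $1/2$: setting $h=2\sum_i\sigma(i)\,h_{\eta_i/2}$, for odd $|x|$ one gets $h(x)=2\sum_i\sigma(i)(\eta_i/2)^{|x|}$, so the moment relations of \cref{lem:MelaApprox} give $h=1$ at $|x|=1$ and $h=0$ at every odd $3\le|x|\le 2m-1$, while for odd $|x|\ge 2m+1$ the extra factor $2^{-|x|}$ supplies the needed decay: $|h(x)|\le 2(2m-1)2^{-(2m+1)}\le 2^{-m}\le\epsilon$ once $m=\lceil\log\epsilon^{-1}\rceil$. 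The approximant is then $h+\1_{\{\0\}}$, with $\norm{h}_A\le 2\sum_i|\sigma(i)|\le 4m-2$.
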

\begin{proof}
The upper bound $\|\1_{B_k}\|_A \le 2^{k/2}\|\1_{B_k}\|_2=\sqrt{k+1}$ is immediate from Cauchy-Schwarz inequality and Parseval's identity. To prove the lower bound on $\norm{\1_{B_k}}_A$, note that for $y \in G$, we have 
$\widehat{\1_{B_k}}(y)=2^{-k} \left(1+\sum_{i=1}^k (-1)^{y_i}\right).$
Hence 
$$\|\1_{B_k}\|_A =\Exs_{z \in \{-1,1\}^k} \left[\left|1+\sum_{i=1}^k z_i\right|\right]\ge \Exs_{z \in \{-1,1\}^k} \left[\left|\sum_{i=1}^k z_i\right|\right]  \ge \frac{1}{2} \left(\Exs_{z \in \{-1,1\}^k}\left[ \left|\sum_{i=1}^k z_i\right|^2\right]\right)^{1/2}=\frac{\sqrt{k}}{2},$$
where the first inequality uses the fact that $\sum_{i=1}^k z_i$ is a symmetric random variable, and the second inequality is an application of Khintchine's inequality.  

To prove the upper bound on $\|\1_{B_k}\|_{A,\epsilon}$, for $a \in \F_2^k$ and $s \in [-1,1]$, define  
$$0 \le \widehat{g}_s(a) \defeq 2^{-k} \left( \prod_{i=1}^k (1+s (-1)^{a_i})\right)=2^{-k} \sum_{x \in \F_2^k} s^{|x|} \chi_a(x),$$
where $|x|$ denotes the Hamming weight of $x$. Let  $g_s(x)=\sum_{a \in \F_2^k} \widehat{g}_s(a)\chi_a(x)$. It is also straightforward to verify that $g_s(x)=s^{|x|}$. By positivity of the Fourier coefficients $\widehat{g}_s(a)$, we have
\begin{equation} 
\label{eq:Norm_of_g}
\|g_s\|_A = g_s(0)=1 \qquad \text{for all $s \in [-1,1]$.}
\end{equation}
Moreover, substituting $s=\epsilon$  gives 
$$\left\|\frac{g_\epsilon-(1-\epsilon)\1_{\{\0\}}}{\epsilon} -  \1_{B_k}\right\|_\infty \le \epsilon .$$
This shows    $\|\1_{B_k}\|_{A,\epsilon} \le\ 2/{\epsilon}$, but this upper bound can be further strengthened. 

For $s\in [0,1]$, define $h_s:\F_2^k \to \R$ as  
\begin{equation}
    \label{eq:DescH}
h_s(x) \defeq \frac{g_s(x) - g_{-s}(x)}{2}=
\left\{\begin{array}{lcl}
s^{|x|} &\quad & \text{if $|x|$ is odd}\\
0 & & \text{if $|x|$ is even} 
\end{array}\right..
\end{equation}
By \cref{eq:Norm_of_g},
\begin{equation} 
\label{eq:Norm_of_h}
\|h_s\|_A \le \frac{\|g_s\|_A+\|g_{-s}\|_A}{2} = 1 \qquad \text{for all $s \in [0,1]$.}
\end{equation}
Let $\sigma:[m] \to \R$ and $\eta_i=\cos\left(\frac{m-i}{2m-1} \pi\right)$ for $i\in [m]$ be as defined in \cref{lem:MelaApprox}, where $m=\ceil{\log \epsilon^{-1}}$. Define $h:\F_2^k \to \R$ as 
$$h(x) = 2 \sum_{i=1}^m \sigma(i) h_{\frac{\eta_i}{2}}(x).$$
Note that 
\begin{itemize}
    \item If $|x|$ is even, then  by \cref{eq:DescH},  $h(x)=0$. 
    \item If $|x|=1$, then $h_s(x)=s$, and thus by  \cref{lem:MelaApprox}~(i),  $h(x)=2\sum_{i=1}^m \frac{\eta_i\sigma(i)}{2} =1.$ 
    \item If $|x|\le 2m-1$ is odd, then by   \cref{lem:MelaApprox}~(ii),  $$h(x)=2 \sum_{i=1}^m \sigma(i) \left(\frac{\eta_i}{2} \right)^{|x|}=0.$$
   \item If $|x| \ge 2m+1$ is odd, then  by \cref{lem:MelaApprox}~(iii) and the triangle inequality,
    $$h(x) = 2 \sum_{i=1}^m \sigma(i) \left(\frac{\eta_i}{2} \right)^{|x|}\le
      2^{1-|x|} \sum_{i=1}^m |\sigma(i)|  \le 2^{-2m}(2m-1) \le 2^{-m} \le  \epsilon. $$ 
\end{itemize}
Hence 
$\left\|(h+\1_{\{\0\}}) -  \1_{B_k}\right\|_\infty \le \epsilon$. Finally note that by \cref{eq:Norm_of_h}, we have 
$$\|h\|_A \le  2 \sum_{i=1}^m |\sigma(i)| \cdot \|h_{\frac{\eta_i}{2}}(x)\|_A 
\le 2 \sum_{i=1}^m |\sigma(i)| \le 2(2m-1) \le 4 \log \epsilon^{-1}. $$
We conclude that
$$\|\1_{B_k}\|_{A,\epsilon} \le  1 +  \|h \|_A \le 5 \log \epsilon^{-1}.$$
\end{proof}
\begin{remark}
The lower bound on $\norm{\1_{B_k}}_A$ is a special case of the well–known theorem of Rudin concerning lacunary series~\cite{MR0116177}. Similar bounds for more general symmetric functions were also studied in~\cite{MR3003564} and~\cite{Hambardzumyan2021DimensionfreeBA}. The fact that $\norm{\1_{B_k}}_{A,\epsilon} = O_\epsilon(1)$ was also previously known through the relation between randomized parity decision tree complexity and the approximate spectral norm~\cite[Lemmas 7 and 8]{MR3218542} and~\cite{MR2553112}. However, such proofs lead to polynomial dependencies on $\epsilon$.  The new feature of \cref{lem:Bk} is the logarithmic dependency on $\epsilon$, which is optimal up to a multiplicative constant factor. As we mentioned above, our proof closely followed the proof of~M\'{e}la~\cite[Section 7.1]{MR665414}, where he gave a similar construction for certain infinite Abelian groups (e.g., $\Z \times \Z$).   
\end{remark}

\cref{lem:Bk}, combined with \cref{lem:BoundA}, shows that the coset complexity of $\1_{B_k}$ is $\Omega(\log k)$, while its $\epsilon$-approximate algebra norm is at most $5\log \epsilon^{-1}$. This illustrates that the assertion of \cref{thm:Cohen} is not necessarily true under the weaker assumption that $\norm{\1_A}_{A,\epsilon} \le M$ if $M \ge 5 |\log(\epsilon)|$. On the other hand, recall that by~\cite[Proposition 7.1]{Green_Sanders}  the assertion of \cref{thm:Cohen} is true if $M \le c \log \epsilon^{-1}$ where $c$ is a universal constant.

\section{Proof of \cref{thm:main}}
We present the proof of \cref{thm:main} in two parts. In \cref{sec:W}, we prove the existence of a subgroup $W$ that satisfies the properties that will  be used in the main induction. The main induction  is presented in \cref{sec:mainInduction}.

\subsection{Part I: Finding a ``good'' subgroup $W$} \label{sec:W}

We first prove two lemmas (\cref{lem:LargeCoset} and \cref{lem:subspace}) that establish the existence of a coset $V+a$ such that $|V+a| \approx |A| \approx |(V+a) \cap A|$.  
\begin{lemma}
\label{lem:LargeCoset}
 Suppose $A \subseteq G$ has coset complexity at most $\ell$. Then there exists a coset $V+a \subseteq G$ such that
 $$|V+a| \ge 2^{-\ell}|A|  \qquad \text{and} \qquad   V+a \subseteq A.$$ 
\end{lemma}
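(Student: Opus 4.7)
My plan is to proceed by induction on the coset complexity $\ell$.

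The base case $\ell = 0$ is immediate: if $A$ is nonempty and lies in the ring generated by no cosets, then $A = G$, so $V+a=G$ works.

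For the inductive step, let $V_1+b_1,\ldots,V_\ell+b_\ell$ be the generating cosets, and assume WLOG that each $V_i$ is a proper subgroup of $G$. Since $G = \F_2^n$, every proper subgroup is contained in a linear hyperplane; pick $H_0\supseteq V_\ell$ and set $H=H_0+b_\ell$, $H'=G\setminus H$, giving a partition $G=H\sqcup H'$ with $V_\ell+b_\ell\subseteq H$ disjoint from $H'$.

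Restricted to the affine space $H'$ (isomorphic to $\F_2^{n-1}$), the generator $V_\ell+b_\ell$ becomes empty, and each remaining generator $V_i+b_i$ intersects $H'$ in a single coset (or the empty set). Hence the coset complexity of $A\cap H'$ inside $H'$ is at most $\ell-1$. In the favorable case $|A\cap H'|\ge|A|/2$, applying the inductive hypothesis inside $H'$ yields a coset $V+a\subseteq A\cap H'\subseteq A$ with
\[
|V+a|\ge\frac{|A\cap H'|}{2^{\ell-1}}\ge\frac{|A|/2}{2^{\ell-1}}=\frac{|A|}{2^\ell},
\]
as required.

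The harder case is when $|A\cap H|>|A|/2$, so $A$ concentrates on the side of the hyperplane containing $V_\ell+b_\ell$. Here $V_\ell+b_\ell$ remains a nontrivial generator inside $H$, so the inductive hypothesis on $\ell$ does not directly reduce the problem in $H$, and I expect this to be the main obstacle. To address it I would iterate the hyperplane construction inside $H$, obtaining a decreasing sequence of cosets of halving ambient dimension, until either the favorable case eventually occurs (and the $|A|/2^\ell$ bound is preserved after accounting for the cumulative halvings) or we shrink into an ambient space where the ring structure collapses to the base case. Alternatively, when $\codim V_\ell\ge 2$ there are $2^{\codim V_\ell}-1$ distinct hyperplanes containing $V_\ell+b_\ell$, and one can either choose $H$ adaptively or average over such $H$ to try to force $|A\cap H'|\ge|A|/2$ for some choice. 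The delicate bookkeeping in this concentrated case is the key technical step I anticipate having to work out.
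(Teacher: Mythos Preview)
Your argument has a genuine gap: the ``hard case'' $|A\cap H|>|A|/2$ is never resolved, only sketched with the hope that iteration or averaging over hyperplanes will work. The gap is not merely bookkeeping; as you yourself observe, inside $H$ the generator $V_\ell+b_\ell$ is still a nontrivial proper coset, so the coset complexity does not drop and the induction does not apply.

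The gap is self-inflicted. The paper's proof avoids it entirely by splitting on $V_\ell+b_\ell$ itself rather than on a hyperplane containing it. The point is that both pieces then have coset complexity at most $\ell-1$: inside $V_\ell+b_\ell$ the $\ell$-th generator becomes the whole ambient space and can be dropped, while on $(V_\ell+b_\ell)^c$ the $\ell$-th generator is empty and can equally be dropped. One of the two pieces has size at least $|A|/2$, and applying the induction hypothesis to that piece finishes the proof in one line. There is no asymmetry between the two sides and no ``concentrated case.'' Your $H'$ case is really just (a special instance of) the paper's $(V_\ell+b_\ell)^c$ case; what you are missing is that restricting all the way down to $V_\ell+b_\ell$, rather than only to a hyperplane containing it, makes the other case equally easy.
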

\begin{proof}
The proof is by a simple induction on $\ell$. 
For the base case $\ell=1$, $A$ is either a coset or the complement of a coset. The former case is trivial. In the latter case, $\abs{A}\geq \abs{G}/2$ and thus one takes a coset of co-dimension 1 disjoint from $A$ to be the required coset $V+a$.

For $\ell>1$,  suppose that $A$ belongs to the ring generated by $V_1+b_1,\ldots,V_\ell+b_\ell$. Note that both $A \cap (V_\ell+b_\ell)$ and $A \cap (V_\ell+b_\ell)^c$ have coset complexity at most $\ell-1$, and  one of them has size larger than  $\frac{|A|}{2}$. Applying the induction hypothesis to this set completes the proof.   
\end{proof}

Recall that $\EE(A)$ denotes the additive energy of $A$.
The following lemma is essentially from \cite{MR3991375}. Its proof is based on several fundamental results in additive combinatorics. 

\begin{lemma} \label{lem:subspace}
If $A \subseteq G$ satisfies $\EE(A) \ge  \epsilon |A|^3$, then there exists a coset $V+a$  with
$$|V+a| \ge 2^{-O((\log \epsilon^{-1})^{3+o(1)})}|A| \qquad  \text{and} \qquad |A \cap (V+a)| \ge  2^{-O((\log \epsilon^{-1})^{1+o(1)})} |V+a|. $$ 
\end{lemma}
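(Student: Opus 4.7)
The plan is to combine the Balog--Szemer\'edi--Gowers theorem with Sanders' quasi-polynomial Bogolyubov--Ruzsa lemma (the main technical engine of \cite{MR3991375}, which this lemma is explicitly attributed to) and a standard averaging argument.

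First, from $\EE(A)\ge \epsilon|A|^3$ I would invoke the Balog--Szemer\'edi--Gowers theorem to extract a subset $A'\subseteq A$ of size $|A'|\ge \epsilon^{O(1)}|A|$ with polynomial doubling $|A'+A'|\le K|A'|$ for some $K=\epsilon^{-O(1)}$. This converts the weak additive-energy hypothesis into the structural small-doubling hypothesis required downstream.

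The core step is to apply Sanders' quasi-polynomial Bogolyubov--Ruzsa lemma to $A'$, producing a subspace $V\le G$ with $V\subseteq 2A'-2A'=4A'$ (the minus signs disappear since we are in characteristic $2$) and
\[
|V|\ge 2^{-O((\log K)^{3+o(1)})}|A'|\ge 2^{-O((\log \epsilon^{-1})^{3+o(1)})}|A|,
\]
which already supplies the required lower bound on $|V+a|$. This is the only deep ingredient in the proof; the $(\log\epsilon^{-1})^{3+o(1)}$ exponent in the final bound is inherited directly from Sanders' theorem, and this is the step I expect to be the main obstacle.

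Finally, to locate a translate of high density, I would use a simple averaging argument. The identity $\sum_{a\in G}|A'\cap(V+a)|=|V|\cdot|A'|$, together with the fact that the summand is supported on $A'+V\subseteq 5A'$, which has size at most $K^{O(1)}|A'|$ by the Pl\"unnecke--Ruzsa inequality, yields by pigeonhole some $a\in G$ with
\[
\frac{|A\cap(V+a)|}{|V+a|}\ge \frac{|A'\cap(V+a)|}{|V|}\ge \frac{|A'|}{|A'+V|}\ge K^{-O(1)}=2^{-O(\log \epsilon^{-1})},
\]
which is even stronger than the claimed density bound $2^{-O((\log \epsilon^{-1})^{1+o(1)})}$. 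Apart from Sanders' input, every surrounding step (BSG, Pl\"unnecke--Ruzsa, pigeonhole) is a classical black box, so the proof reduces entirely to choosing the right invocations and tracking the parameters through the three reductions.
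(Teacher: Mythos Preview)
Your proposal is correct and follows essentially the same approach as the paper: apply Balog--Szemer\'edi--Gowers to pass to a subset $A'$ with small doubling, then invoke Sanders' machinery from \cite{MR3991375}. The paper simply cites \cite[Proposition~2.2]{MR3991375} as a black box for the second step, whereas you have unpacked its contents (quasi-polynomial Bogolyubov--Ruzsa plus the Pl\"unnecke--pigeonhole averaging), but the underlying argument is the same.
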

\begin{proof}
By the Balog-Szemer\'edi-Gowers theorem~\cite[Theorem 2.31]{MR2573797}  there is a  subset $A' \subseteq A$ such that
$|A'| \ge \epsilon^{O(1)} |A|$ and $|A'+A'| \le \epsilon^{-O(1)}|A'|$. Now we can apply  \cite[Proposition 2.2]{MR3991375} to conclude the existence of the desired coset $V+a$.
\end{proof}

The following lemma is an adaptation of \cite[Lemma 3.4]{Green_Sanders}. It says that if $\|f\|_A \le M$, then every subgroup $V$ can be regularized to a slightly smaller subgroup $W$ such that $f$ has small variance on all cosets of $W$.  

\begin{lemma}\label{lem:var}
Suppose $f:G \to \R$ satisfies $\norm{f}_A \le M$, and let $V \subseteq G$ be a subgroup and $\delta>0$ be a parameter. There exists a subgroup $W \subseteq V$ such that $\dim(W) \ge  \dim(V)- \frac{M}{\delta}$ and  
$\Var[f|W+c] \le \delta M$ for every $c\in W^\perp$.
\end{lemma}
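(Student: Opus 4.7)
The plan is to refine $V$ iteratively by passing to codimension-one subgroups, using the potential
\[
\phi(W) \;\defeq\; \|f*\mu_W\|_A \;=\; \sum_{\chi \in W^\perp}|\widehat f(\chi)|,
\]
which by \cref{eq:annihilator} satisfies $0\le\phi(W)\le\|f\|_A\le M$ and is monotone non-decreasing as $W$ shrinks (since $W'\subseteq W$ implies $W^\perp\subseteq(W')^\perp$). I will show that whenever the current subgroup $W$ admits a coset $W+c_0$ with $\Var[f|W+c_0]>\delta M$, one can produce a codimension-one subgroup $W'\subseteq W$ with $\phi(W')\ge \phi(W)+\delta$. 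Since $\phi\le M$, this refinement can happen at most $M/\delta$ times, so the terminal subgroup $W$ satisfies $\dim(W)\ge \dim(V)-M/\delta$ and the required variance bound on every coset.

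\textbf{Extracting a large Fourier coefficient.} Fix $W$ and suppose $\Var[f|W+c_0]>\delta M$ for some $c_0$. View $g:=f|_{W+c_0}$ as a function on $W$. By Parseval on $W$,
\[
\sum_{b \in \widehat W,\, b\neq 0}|\widehat g(b)|^2 \;=\; \Var[g] \;>\; \delta M.
\]
On the other hand, applying the triangle inequality to \cref{eq:FourierRestriction} and summing over $b$ yields
\[
\sum_{b \in \widehat W}|\widehat g(b)| \;\le\; \sum_{b \in \widehat W}\sum_{a\in W^\perp}|\widehat f(b+a)| \;=\; \|f\|_A \;\le\; M.
\]
Combining the two inequalities via $\sum |\widehat g|^2 \le \max|\widehat g|\cdot\sum|\widehat g|$ produces a nonzero $b^*\in\widehat W$ with $|\widehat g(b^*)|\ge \delta$. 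A single application of \cref{eq:FourierRestriction} then gives $\sum_{a\in W^\perp}|\widehat f(b^*+a)|\ge|\widehat g(b^*)|\ge \delta$.

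\textbf{Refinement and termination.} Lift $b^*$ to an element of $G\setminus W^\perp$ (still denoted $b^*$) and set $W':=W\cap \ker(\chi_{b^*})$, a codimension-one subgroup of $W$. Then $(W')^\perp=W^\perp\sqcup (W^\perp + b^*)$, so
\[
\phi(W') \;=\; \phi(W) + \sum_{a\in W^\perp}|\widehat f(b^*+a)| \;\ge\; \phi(W)+\delta,
\]
and iterating completes the argument within $M/\delta$ steps. The only nontrivial ingredient is the $L^2\to L^\infty$ transfer in the key step, which works precisely because the hypothesis $\|f\|_A\le M$ bounds the Fourier $L^1$ norm of $g$; I do not anticipate a substantive obstacle, since once $\phi$ is identified as the correct potential and \cref{eq:FourierRestriction} is exploited in both directions, the remainder is a bookkeeping exercise.
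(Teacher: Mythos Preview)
Your argument is correct and is essentially the same as the paper's: the paper invokes \cite[Lemma~3.4]{Green_Sanders} as a black box to produce $W$ with $\sum_{r'\in W^\perp+r}|\widehat f(r')|\le\delta$ for all $r\notin W^\perp$ and then deduces the variance bound, whereas you unpack that lemma in-line, running the same potential-increment iteration with $\phi(W)=\sum_{\chi\in W^\perp}|\widehat f(\chi)|$ and using the contrapositive (large variance $\Rightarrow$ large restricted Fourier coefficient $\Rightarrow$ $\phi$ jumps by $\delta$). The only cosmetic difference is the termination criterion: the paper stops when the spectral-mass condition holds uniformly and then infers small variance, while you stop directly when all coset variances are small; the underlying increment step and the $M/\delta$ bound on iterations are identical.
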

\begin{proof}
If a function $g:\F_2^k \to \R$ satisfies   $\|g\|_A \le M$ and $|\widehat{g}(r)| \le \delta$ for all $r \neq 0$, then 
$$\Var[g]= \sum_{r \neq 0} |\widehat{g}(r)|^2 \le \delta \norm{g}_A \le \delta M.$$ 

By \cite[Lemma 3.4]{Green_Sanders}, there exists a subgroup $W \subseteq V$ such that $\dim(W) \ge  \dim(V)- \frac{M}{\delta}$, and  for every $r \not\in W^\perp$, 
\begin{equation}
\label{eq:SpectralSupport}
\sum_{r' \in W^\perp+r} |\widehat{f}(r')| \le \delta.
\end{equation}
For $c \in W^\perp$, define $g:W \to \R$ as $g(y) \defeq f(y+c)$, and note that by \cref{eq:FourierRestriction}, for every $r \in W$, 
$$\widehat{g}(r) =\sum_{t \in W^\perp} \widehat{f}(r+t) \chi_{t}(c).$$
In particular, we have $\|g\|_{A(W)} \le \|f\|_A \le M$, and moreover by \cref{eq:SpectralSupport}, we have $|\widehat{g}(r)|\le \delta$ for every $r \neq 0$. It follows that  $\Var[f|W+c]=\Var[g] \le \delta M$ as desired. 
\end{proof}

The following corollary is the conclusion of this section. It shows that if the assumptions of \cref{lem:LargeCoset} or \cref{lem:subspace} hold, then one can find the desired subgroup $W$ with the properties that are needed in the  proof of \cref{thm:main}. 
 
\begin{corollary}
\label{cor:main}
Let $M \ge \frac{1}{2}, \epsilon_1,\epsilon_2>0$, $\epsilon \in [0,\frac{1}{2})$, and $\delta <\min\set{1/2,\epsilon_2}$ be parameters. Suppose $A \subseteq G$, and $g:G \to \R$ satisfies $\|\1_A-g\|_\infty \le \epsilon$ and $\|g\|_A \le M$. If there exists a coset $V+a$  with 
$$|V+a| \ge \epsilon_1 |A| \qquad  \text{and} \qquad |A \cap (V+a)| \ge \epsilon_2 |V|,$$
then there exists a subgroup $W \subseteq V$ such that 
\begin{enumerate}[label={\normalfont(\roman*)}]
    \item $\Ex[\1_A|W+c] \le \delta$ or $\Ex[\1_A|W+c] \ge 1-\delta$ for every $c\in W^\perp$. 
    \item The set 
    \[\cF_W = \{c \in W^\perp:~\Ex[\1_A|W+c] \ge 1-\delta \}\]
    satisfies $1 \le |\cF_W| \le   2^{\frac{5M^2}{(1-2\epsilon)^2 \delta}}/\epsilon_1$. 
    
    \item If $\cF_W \neq W^\perp$, then $\|g|_{W+c}\|_{\chA(W)} \le \|g\|_{\chA}-\frac{1-2\epsilon- 2\delta}{2}$ for every $c\in W^\perp$.
\end{enumerate}
\end{corollary}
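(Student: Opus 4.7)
I would apply \cref{lem:var} to $g$ with subgroup $V$ and parameter $\delta'\defeq(1-2\epsilon)^2\delta/(3M)$, obtaining $W\subseteq V$ with $\dim V-\dim W\le M/\delta'=3M^2/((1-2\epsilon)^2\delta)$ and $\Var[g|_{W+c}]\le\delta'M=(1-2\epsilon)^2\delta/3$ for every $c\in W^\perp$. The three claims of the corollary are then verified one by one using this single choice of $W$.

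The crux of (i) is the variance lower bound
\[
\Var[g|_{W+c}] \;\ge\; (1-2\epsilon)^2\, p_c(1-p_c), \qquad p_c\defeq\Exs[\1_A|W+c].
\]
To prove it, I would rescale by $\tilde g\defeq(g-\epsilon)/(1-2\epsilon)$: the hypothesis $\|g-\1_A\|_\infty\le\epsilon$ becomes $\tilde g\ge 1$ on $A$ and $\tilde g\le 0$ on $A^c$, and $\Var[g|_{W+c}]=(1-2\epsilon)^2\Var[\tilde g|_{W+c}]$, so it suffices to show $\Var[\tilde g|_{W+c}]\ge p_c(1-p_c)$. Setting $\alpha\defeq\Exs[\tilde g|_{W+c}]$, a short case analysis handles this: for $\alpha\in[0,1]$, the quantity $(\tilde g(x)-\alpha)^2$ is bounded below by $(1-\alpha)^2$ on $A$ and by $\alpha^2$ on $A^c$, and the resulting lower bound $p_c(1-\alpha)^2+(1-p_c)\alpha^2$ is minimized at $\alpha=p_c$ with value $p_c(1-p_c)$; the regimes $\alpha>1$ and $\alpha<0$ yield even larger bounds. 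Combining with the upper bound on $\Var[g|_{W+c}]$ gives $p_c(1-p_c)\le\delta/3<\delta(1-\delta)$ (using $\delta<1/2$), forcing $p_c\le\delta$ or $p_c\ge 1-\delta$.

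For (ii), since $V+a$ is a union of $W$-cosets with overall $\1_A$-density at least $\epsilon_2>\delta$, at least one such coset must lie in $\cF_W$ by (i), giving $|\cF_W|\ge 1$. Conversely, $|A|\ge\sum_{c\in\cF_W}|A\cap(W+c)|\ge(1-\delta)|W|\,|\cF_W|$ combined with $|A|\le|V|/\epsilon_1$ and $1-\delta>1/2$ yields $|\cF_W|\le 2\cdot 2^{M/\delta'}/\epsilon_1=2^{1+3M^2/((1-2\epsilon)^2\delta)}/\epsilon_1$. Since $M\ge 1/2$ and $(1-2\epsilon)^2\delta<1/2$ imply $1\le 2M^2/((1-2\epsilon)^2\delta)$, this is in turn bounded by $2^{5M^2/((1-2\epsilon)^2\delta)}/\epsilon_1$, as claimed.

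For (iii), the Fourier restriction identity \cref{eq:FourierRestriction} gives $|\widehat{g|_{W+c}}(b)|\le\sum_{c'\in W^\perp}|\widehat g(b+c')|$, and summing over $b\in W\setminus\{0\}$ produces $\|g|_{W+c}\|_{\chA(W)}\le\sum_{r\notin W^\perp}|\widehat g(r)|=\|g\|_{\chA}-\|g*\mu_W\|_{\chA}$. It therefore suffices to show $\|g*\mu_W\|_{\chA}\ge(1-2\epsilon-2\delta)/2$. The function $g*\mu_W$ is constant on each $W$-coset with value within $\epsilon$ of $p_c$; by (ii) $\cF_W\ne\emptyset$, and by hypothesis $\cF_W\ne W^\perp$, so $g*\mu_W$ takes some value $\ge 1-\delta-\epsilon$ and some value $\le\delta+\epsilon$, giving oscillation at least $1-2\epsilon-2\delta$. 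The elementary bound $\max h-\min h\le 2\|h\|_{\chA}$ then closes the argument. The main technical obstacle I anticipate is isolating the $(1-2\epsilon)^2$ factor in the variance lower bound---this is what prevents the bound in (ii) from blowing up as $\epsilon\to\tfrac12$, and everything else reduces to routine bookkeeping once this rescaling step is in hand.
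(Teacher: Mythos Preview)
Your proposal is correct and follows essentially the same route as the paper: apply \cref{lem:var} with a suitably rescaled parameter, then verify (i)--(iii) in turn using the resulting variance bound, the pigeonhole/counting argument, and the Fourier restriction identity combined with an oscillation estimate. The only minor difference is in (i), where you prove the sharper inequality $\Var[g|_{W+c}]\ge(1-2\epsilon)^2 p_c(1-p_c)$ via the rescaling $\tilde g=(g-\epsilon)/(1-2\epsilon)$, whereas the paper simply case-splits on whether $\alpha=\Exs[g|W+c]$ lies above or below $1/2$ and uses $\Var[g|W+c]\ge p_c(\tfrac12-\epsilon)^2$ (resp.\ $(1-p_c)(\tfrac12-\epsilon)^2$), which avoids the case analysis on the range of $\alpha$.
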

\begin{proof}
    By Lemma \ref{lem:var}, there exists a subgroup $W\subseteq V$ such that $\dim(W)\geq \dim(V)- \frac{4M^2}{(1-2\epsilon)^2 \delta}$ and $\Var[g|W+c]\leq  \frac{(1-2\epsilon)^2 \delta}{4} $ for every $c\in W^\perp$. We prove that $W$ is the desired subgroup.
    
     We first prove that (i) is satisfied. Let $\alpha=\Ex[g|W+c]$. If $\alpha \le \frac{1}{2}$, then since $\|\1_A-g\|_\infty \le \epsilon$, we have
     $$\Var[g|W+c]= \Exs_{x \in W+c} [|g(x) - \alpha|^2] \ge \Pr_{x \in W+c}[x \in A] (1-\epsilon-\alpha)^2 \ge \Pr_{x \in W+c}[x \in A] \left(\frac{1}{2}-\epsilon\right)^2,$$
     which shows 
    $$\Ex[\1_A|W+c] \le \left(\frac{1}{2}-\epsilon\right)^{-2}\cdot \Var[g|W+c] \le \delta.$$
    Similarly if $\alpha \ge \frac{1}{2}$, then $\Ex[1-\1_A|W+c] \le \delta.$

    For (ii), we first prove the lower bound by contradiction. Suppose the contrary that $\abs{\cF_W}=0$, then by (i), we have 
    \begin{equation}
        \abs{A\cap (W+c)}\leq \delta\abs{W}  \label{eq:supp-W}
    \end{equation}
    for every $c\in W^\perp$. By our choice of $\delta$, summing \cref{eq:supp-W} over all cosets $W+c$ in $V+a$ gives $\abs{A \cap (V+a)} \leq \delta\abs{V}<\epsilon_2\abs{V}$, which is a contradiction. For the upper bound on $\abs{\cF_W}$, as $\Ex[\1_A|W+c]\geq 1-\delta$ for any $c\in \cF_W$, we have
    \[\abs{A}\geq \sum_{c\in \cF_W} \abs{A\cap (W+c)}\geq \abs{\cF_W}\cdot (1-\delta)\abs{W},\]
    which yields the desired upper bound 
    \[\abs{\cF_W}\leq \frac{\abs{A}}{(1-\delta)\abs{W}}
    \leq \frac{1}{1-1/2}\cdot \frac{\abs{A}}{\abs{V}}\cdot \frac{\abs{V}}{\abs{W}}\leq \frac{2 \cdot 2^{\frac{4M^2}{(1-2\epsilon)^2 \delta}}}{\epsilon_1}\le \frac{2^{\frac{5M^2}{(1-2\epsilon)^2 \delta}}}{\epsilon_1},\]
    where the last inequality  uses the fact that $M \ge \delta$.
    
    To prove (iii), note that by \cref{eq:FourierRestriction}, we have 
    $$ \norm{g|_{W+c}}_{\chA(W)}= \sum_{b\in W\setminus\set{0}} \left|\sum_{r\in W^\perp} \widehat{g}(b+r)\chi_r(c)\right|.$$
    By the triangle inequality, we obtain the following inequality relating $\norm{g}_\chA$ and $\norm{g|_{W+c}}_{\chA(W)}$:
    $$\norm{g|_{W+c}}_{\chA(W)}
    \leq \sum_{b\in W\setminus\set{0}} \sum_{r\in W^\perp} \abs{\widehat{g}(b+r)}
    =\sum_{\substack{b\in W,r\in W^\perp\\(b,r)\neq (0,0)}} \abs{\widehat{g}(b+r)} -\sum_{r\in W^\perp\setminus\set{0}}\abs{\widehat{g}(r)}
    =\norm{g}_\chA-\sum_{r\in W^\perp\setminus\set{0}}\abs{\widehat{g}(r)}.$$
    It remains to show that the last sum is at least $\frac{1-2\epsilon- 2\delta}{2}$. By (i) and (ii), assuming $\cF_W \neq W^\perp$, there exist $c_1,c_2\in W^\perp$ such that $\Ex[g|W+c_1]\geq 1-\epsilon-\delta$ and $\Ex[g|W+c_2]\leq \epsilon+\delta$. Therefore, by the triangle inequality,
    \[1-2\epsilon- 2\delta\leq \Ex[g|W+c_1]-\Ex[g|W+c_2]=\sum_{r\in W^\perp} \widehat{g}(r) [\chi_r(c_1)-\chi_r(c_2)]\leq 2\sum_{r\in W^\perp\setminus\set{0}} \abs{\widehat{g}(r)}.\]
    This completes the proof of (iii).
\end{proof}

\begin{remark}
\cref{cor:main}~(iii) is one of the new ideas in the proof of \cref{thm:main}. Switching from $\|\cdot\|_A$ to $\|\cdot\|_{\chA}$ guarantees a significant decrease in the norm on every coset of $W$. 
\end{remark}

\subsection{Part II:  Induction}
\label{sec:mainInduction}
In this section, we finish the proof of \cref{thm:main} by presenting the main inductive argument, which is the principal novelty of our proof.  We start by strengthening the induction hypothesis through the following definition.

\begin{definition} 
\label{def:PropertyP}
Let $m,k \in \N$, and $0\le \epsilon<\frac{1}{2}$ be parameters. We say that $A \subseteq G$ and $g:G \to \R$ satisfy the property $\cP_{k,\epsilon}(m)$ if 
\begin{enumerate}[label={\normalfont(\roman*)}]
\item both $A$ and $A^c$ are $k$-affine connected, and
\item  $\|\1_A-g\|_\infty \le \epsilon$ and $\|g\|_{\chA} \le \left(\frac{1-2\epsilon}{4}\right)m$. 
\end{enumerate}
Moreover, if $t,r\in \N$ with $r\le k+1$, and $|A| \le \frac{|G|}{2}$, then we say that $A$ and $g$ satisfy the property $\cP'_{k,\epsilon}(m,r,t)$ if (i) and (ii) hold, and additionally  there exists $\cX=\bigcup_{i=1}^t \left(W_i+a_i\right)$ where every  $W_i+a_i$ is a coset in $G$ and the following conditions are satisfied:
\begin{enumerate}[label={\normalfont(\roman*)}]
    \setcounter{enumi}{2}
  
    \item $\|g|_{W_i+c}\|_{\chA(W_i)} \le \left(\frac{1-2\epsilon}{4}\right)(m-1)$ for every $i \in [t]$ and every $c \in G$. 
    
    \item  For every $x_1,\ldots,x_r \in A \setminus \cX$, either
    \begin{enumerate}[label={\normalfont(\alph*)}]
        \item \label{item:PropertyP_iva} there exists a set $S \subseteq [r]$ such that $|S|>1$ is odd and $\sum_{i \in S} x_i \in A \setminus \cX$; or,
        \item \label{item:PropertyP_ivb} there exists a nonempty $S \subseteq [r]$ such that $\sum_{i \in S} x_i \in \cX$. 
    \end{enumerate}
\end{enumerate}
  
\end{definition}  

\begin{remark}\label{rem:def2}
Note that if $A$ and $g$ satisfy $\cP_{k,\epsilon}(m)$ and $|A| \le \frac{|G|}{2}$, then   taking $\cX=\set{\0}$ shows that $A$ and $g$ satisfy $\cP'_{k,\epsilon}(m,k+1,1)$. Indeed, with these parameters, (iii) is trivially satisfied as $\|g|_{\{c\}}\|_{\chA(\set{\0})}=0$ for all $c \in G$, and (iv) is equivalent to the assumption that $A$ is $k$-affine connected.

Similarly, if $|A| > \frac{|G|}{2}$, then $A^c$ and $1-g$ satisfy $\cP'_{k,\epsilon}(m,k+1,1)$. 
\end{remark}

The following lemma is the core of the proof of \cref{thm:main}.

\begin{lemma}[Main lemma]
\label{lem:mainLemma}
Let $\epsilon,m,r,k,t$ be as in \cref{def:PropertyP}. 
If $A \subseteq G$ and $g:G \to \R$ satisfy $\cP_{k,\epsilon}(m)$, then the coset complexity of  $A$ is at most
   $$ \ell_{k,\epsilon}(m) \defeq    \tower_{16}\left((m-1)k+1+\log_{16}^* \left(\frac{1}{1-2\epsilon} \right)+O(1)\right).
    $$
If $A$ and $g$ satisfy $\cP'_{k,\epsilon}(m,r,t)$, then the coset complexity of $A$ is at most
\[
    \ell_{k,\epsilon}(m,r,t) \defeq
    \left\{
    \begin{array}{lcl}
    1 & & m=1, r \le k\\
    \tower_{16}\left(r+\log_{16}^* \max\left\{t,\ell_{k,\epsilon}(m-1)\right\}\right)&& \text{otherwise}.
    \end{array}\right. \]
In particular, $\ell_{k,\epsilon}(m,k+1,1)=\ell_{k,\epsilon}(m)$. 
\end{lemma}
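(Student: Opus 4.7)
The plan is to prove the $\cP'$ statement by double induction (outer on $m$, inner on $r$); the $\cP$ statement then follows from Remark 4, since replacing $A$ by $A^c$ if $|A|>|G|/2$ preserves coset complexity and swaps the roles of $g$ and $1-g$, so $\cP_{k,\epsilon}(m)$ always yields $\cP'_{k,\epsilon}(m,k+1,1)$ for one of $A, A^c$. The inductive step splits into two cases, following the overview: either $A\subseteq \cX$, in which case I finish via the outer induction on $m$ applied to each restriction $A\cap (W_i+a_i)$; or $A\not\subseteq \cX$, in which case I add $O(1)$ new cosets to $\cX$ to reduce $r$ by one and invoke the inner induction.

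For the base case $m=1$, $r\le k$: the bound $\|g\|_\chA\le (1-2\epsilon)/4$ confines $g$ to an interval of length $(1-2\epsilon)/2$ around its mean, so together with $\|\1_A-g\|_\infty\le \epsilon$ the function $\1_A$ lies in an interval of length strictly less than $1$, forcing $\1_A$ to be constant on $G$; combined with $|A|\le |G|/2$, this yields $A=\emptyset$ (or the trivial edge case), meeting the coset-complexity bound of $1$. For the case $A\subseteq \cX$ in the inductive step, restriction of a $k$-affine connected set to a coset $V+v$ remains $k$-affine connected, since if $a_0,a_0+a_1,\ldots,a_0+a_k\in A\cap (V+v)$ then each $a_j\in V$, so every sum $a_0+\sum_{i\in T}a_i$ lies in $V+v$, and Definition 1 applied inside $A$ produces a sum that automatically stays in $A\cap (V+v)$. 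Hence each $A\cap (W_i+a_i)$, viewed as a subset of $W_i$, satisfies $\cP_{k,\epsilon}(m-1)$ by property (iii); the outer induction bounds its coset complexity in $W_i$ (and hence in $G$) by $\ell_{k,\epsilon}(m-1)$, and the union over the $t$ cosets is at most $t\cdot \ell_{k,\epsilon}(m-1)$, which is absorbed by $\ell_{k,\epsilon}(m,r,t)=\tower_{16}(r+\log_{16}^*\max\{t,\ell_{k,\epsilon}(m-1)\})$.

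For the case $A\not\subseteq \cX$ in the inductive step, the goal is to enlarge $\cX$ to $\cX'\defeq \cX\cup \bigcup_{c\in \cF_W}(W+c)$ for a well-chosen subgroup $W$, so that $\cP'_{k,\epsilon}(m,r-1,t+|\cF_W|)$ holds. I would first locate a coset $V+a$ where $A$ is dense, either via Lemma 1 after producing a low-complexity piece of $A$, or via Lemma 3 after deducing an additive-energy bound for $A\setminus \cX$ from the affine connectivity condition (iv) (when alternative (a) fails, many $r$-tuple sums must land in $\cX$, which yields the energy). Applying Corollary 1 to $V+a$ then produces $W$ with $|\cF_W|=O(1)$ and---crucially---a $\chA$-norm drop of $(1-2\epsilon-2\delta)/2\ge (1-2\epsilon)/4$ on every coset of $W$, which is precisely condition (iii) at level $m-1$ for each new coset $W+c$ (choosing $\delta$ accordingly). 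To verify the reduced (iv) at level $r-1$ with $\cX'$: for any $x_1,\ldots,x_{r-1}\in A\setminus \cX'$, choose an auxiliary $x_r\in A\cap (W+c)$ for some $c\in \cF_W$ (so $x_r\in A\setminus \cX$) and apply (iv) at level $r$ to $x_1,\ldots,x_r$; a pigeonhole or averaging argument over the dense set $A\cap (W+c)$ produces an $x_r$ for which the distinguished subset $S$ does not contain the index $r$, so $S$ already certifies (iv) at level $r-1$ for $\cX'$. The inner induction then yields coset complexity at most $\ell_{k,\epsilon}(m,r-1,t+|\cF_W|)$, which is dominated by $\ell_{k,\epsilon}(m,r,t)$ since $\log_{16}^*$ absorbs a constant additive increase in $t$.

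The main obstacle is the verification of (iv) at level $r-1$ for $\cX'$: one has to guarantee that the auxiliary $x_r\in A\cap (W+c)$ can always be chosen so as not to participate in the distinguished subset $S$ delivered by (iv) at level $r$, uniformly over all choices of $x_1,\ldots,x_{r-1}$. This is where the density lower bound of Corollary 1 on $|A\cap (W+c)|$ for $c\in \cF_W$ plays a decisive role. A secondary challenge is to align the parameters $\delta,\epsilon_1,\epsilon_2$ in Corollary 1 (and the density parameters in Lemmas 2 and 3) so that the recursion collapses to the stated tower $\tower_{16}((m-1)k+1+\log_{16}^*\!(1/(1-2\epsilon))+O(1))$; the factor of $k$ in the tower height comes from accumulating up to $k+1$ consecutive $r$-reductions across each of the $m$ outer induction levels.
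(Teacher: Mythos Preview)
Your overall double-induction scheme matches the paper, but the step from $r$ to $r-1$ has a real gap: your $\cX'=\cX\cup(W+\cF_W)$ is too small, and the ``pigeonhole an $x_r$ with $r\notin S$'' verification of (iv) does not go through with it. The actual argument is by contradiction: assuming (iv) fails at level $r-1$ with $\cX'$ for some $x_1,\ldots,x_{r-1}\in A\setminus\cX'$, one must exhibit an $x_r\in(A\setminus\cX)\cap(W+c_0)$ (for a fixed $c_0\in\cF_W$) so that (iv) also fails at level $r$ with the \emph{original} $\cX$, simultaneously for every $S\ni r$. Writing $x_r=y+c_0$ with $y\in W$ uniform and $a_{S'}=\sum_{i\in S'}x_i$, this requires $x_r+a_{S'}\notin A\setminus\cX$ for every even $|S'|\ge 2$, and $x_r+a_{S'}\notin\cX$ for every nonempty $S'\subseteq[r-1]$. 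The first event has probability $\ge 1-\delta$ only when $a_{S'}+c_0\notin W+\cF_W$; to force this one must also put the shifted set $W+\cF_W+c_0$ into $\cX'$. The second event is controlled only after excluding, for each $i\le t$, the $O(1/\gamma)$ cosets of $W_i$ on which $W+c_0$ has density $\ge\gamma$; these sets $W_i+E_i$ must likewise go into $\cX'$. With only $W+\cF_W$ added, either bad event can occur with probability close to $1$, and no union bound over $y$ closes.

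There is a second, independent gap in how you locate $V+a$. The two alternatives in (iv) lead to genuinely different mechanisms, and you have them reversed. When many odd-size sums of elements of $A\setminus\cX$ land back in $A\setminus\cX$ (alternative (a) frequent), one gets an additive-energy lower bound for $A\setminus\cX$ and applies \cref{lem:subspace}. When instead many sums land in $\cX$ (alternative (b) frequent), no energy bound is available; all one obtains is a single coset $W_i+c$ with $|(A\setminus\cX)\cap(W_i+c)|\gtrsim|A\setminus\cX|$, which by itself is useless since $W_i+c$ can be enormous. The paper then invokes the \emph{outer} induction hypothesis via property~(iii): because $\|g|_{W_i+c}\|_{\chA(W_i)}$ has already dropped by $\tfrac{1-2\epsilon}{4}$, the restriction $A|_{W_i+c}$ satisfies $\cP_{k,\epsilon}(m-1)$ and hence has coset complexity at most $\ell_{k,\epsilon}(m-1)$; only then does \cref{lem:LargeCoset} extract $V+a$. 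This nested call to $\ell_{k,\epsilon}(m-1)$ inside every single $r$-reduction---and not merely the $k+1$ reductions per level of $m$---is what drives the tower bound, and your sketch omits it entirely.
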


\begin{proof}[Proof of \cref{lem:mainLemma}]
By \cref{rem:def2}, it suffices to only prove the second part of the lemma that concerns $\cP'_{k,\epsilon}(m,r,t)$.  The proof is by an induction on the two parameters $m$ and $r$. 

\paragraph{Base of induction $m=1$:} If $m=1$, then 
$\|g\|_{\chA} \le \frac{1-2\epsilon}{4}$, which implies $\|g-\Ex[g]\|_\infty \le \frac{1-2\epsilon}{4}$. Combining with $\| \1_A -g \|_\infty \leq \epsilon$, we have 
$$\| \1_A -\Ex[g] \|_\infty \le \frac{1+2\epsilon}{4}<\frac{1}{2}.$$
Since $\Ex[g]$ is a constant and $|A| \le \frac{|G|}{2}$,  we have $A=\emptyset$. Hence, $A$ has coset complexity $1$, which is at most $\ell_{m,k}(m,r,t)$ in both cases of $r \le k$ and $r=k+1$.

\paragraph{The case $r=1, m>1$:} In this case, by (iv), we have $A \subseteq \cX$, and  by (iii), we have $$\|g|_{W_i+a_i}\|_{\chA(W_i)} \le \left(\frac{1-2\epsilon}{4}\right)(m-1),$$
for every $W_i + a_i \subseteq \cX$. Hence,  for every $i \in [t]$, we can apply the induction  hypothesis to $A|_{W_i+a_i} + a_i \subseteq W_i$  and  conclude that $A|_{W_i+a_i}$ has coset complexity at most $\ell_{k,\epsilon}(m-1)$. Taking the union over all $W_i+a_i$ shows that the coset complexity of $A$ is at most $t \cdot  \ell_{k,\epsilon}(m-1)$. By the inequality $xy \le 16^{\max\set{x,y}}$, which is valid for all positive $x,y$, we have 
$$t \cdot  \ell_{k,\epsilon}(m-1)  \le  \tower_{16}\left(1+\log_{16}^* \max\left\{t,\ell_{k,\epsilon}(m-1)\right\}\right) =\ell_{k,\epsilon}(m,1,t), $$  
as desired.

\paragraph{The case $r>1, m>1$:} Consider \cref{def:PropertyP} (iv).  Since there are at most $2^r $ choices for $S$ and two choices (a) and (b),  one of the following must hold:
\begin{itemize}
   \item {\bf Case I:} There exists an odd $d \in [3,r]$ such that  
    $$\Pr_{x_1,\ldots,x_d \in A\setminus \cX}[x_1+\ldots+x_d \in A\setminus \cX] \ge \frac{1}{2^{r+1}}.$$
    
    \item {\bf Case II:} There exists a $d \in [r]$ such that 
    $$\Pr_{x_1,\ldots,x_d \in A \setminus \cX}[x_1+\ldots+x_d \in \cX] \ge \frac{1}{2^{r+1}}.$$
\end{itemize}
\begin{claim}
In Case I, there exists a coset $V+a$ such that 
\begin{equation}
\label{eq:CaseIFinal}
|V+a| \ge 2^{-k^K}|A\setminus \cX| \qquad  \text{and} \qquad |(A \setminus \cX) \cap (V+a)| \ge  2^{-k^K} |V|,
\end{equation}
where $K=O(1)$ is a universal constant. 
\end{claim}
\begin{proof}
Consider a fixation of $x_4,\ldots,x_d$ that maximizes the probability. We conclude that with $c=x_4+\ldots+x_d$, we have
$$\Pr_{x_1,x_2,x_3 \in A\setminus \cX}[x_1+x_2+x_3 \in (A\setminus \cX) + c] \ge \frac{1}{2^{r+1}},$$
which translates to
$$\Exs_{x_1,x_2,x_3 \in G}[\1_{A \setminus \cX}(x_1)\1_{A \setminus \cX}(x_2)\1_{A \setminus \cX}(x_3)\1_{A \setminus \cX}(x_1+x_2+x_3+c)] \ge \frac{1}{2^{r+1}} \left(\frac{|A \setminus \cX|}{|G|} \right)^3.$$
Substituting the Fourier transform of $\1_{A \setminus \cX}$,  we obtain 
$$\frac{1}{2^{r+1}}   \left(\frac{|A \setminus \cX|}{|G|} \right)^3 \le \sum_{a \in G} |\widehat{\1_{A\setminus \cX}}(a)|^4 \chi_a(c) \le \sum_{a \in G} |\widehat{\1_{A\setminus \cX}}(a)|^4.$$
Hence, by \cref{eq:additiveEn}, the additive energy of $A \setminus \cX$ is large: 
\begin{equation}
\EE(A \setminus \cX) \ge \frac{|A \setminus \cX|^3}{2^{r+1}}.
\end{equation}
We apply \cref{lem:subspace} with $\epsilon=2^{-r-1}\ge 2^{-k-2}$ to conclude the existence of a coset $V+a$ with 
$$|V+a| \ge 2^{-k^K}|A\setminus \cX| \qquad  \text{and} \qquad |(A \setminus \cX) \cap (V+a)| \ge  2^{-k^K} |V|,$$
where $K=O(1)$ is a universal constant. 
\end{proof}

We would like to obtain a similar statement for Case II. Unfortunately, this will require an application of the induction hypothesis, and it is the cause of the tower-type bound in our final result.  
\begin{claim} In Case II, there exists a coset $V+a$ such that
\begin{equation}
\label{eq:CaseIIFinal}
 |V| \ge \frac{2^{-\ell_{k,\epsilon}(m-1)-t}}{t 2^{r+1}} |A \setminus \cX|  \qquad \text{and} \qquad |(A \setminus \cX) \cap (V+a)| = |V|.
\end{equation} 
\end{claim}
\begin{proof}
Considering the structure of $\cX$,  there exists an $i \in [t]$ such that 
$$\Pr_{x_1,\ldots,x_d \in A \setminus \cX}[x_1+\ldots+x_d \in W_i+a_i] \ge \frac{1}{t2^{r+1}}.$$
Hence, there exists at least one choice of $x_2,\ldots,x_d \in A \setminus \cX$ such that %
$$\Pr_{x_1 \in A \setminus \cX}[x_1 \in W_i+a_i+x_2+\ldots+x_d] \ge \frac{1}{t2^{r+1}}.$$
Consequently, there exists  $c \in G$ such that  
$$\Pr_{x \in A \setminus \cX}[x \in W_i+c] \ge \frac{1}{t2^{r+1}},$$
or equivalently 
\begin{equation}
\label{eq:AinW}
|(A \setminus \cX) \cap (W_i+c) | \ge \frac{|A \setminus \cX|}{t2^{r+1}}.
\end{equation}
This by itself does not provide much information about $A\setminus \cX$ as $W_i+c$ could be much larger than $A \setminus \cX$. However, we have made progress by the decrease in $\norm{\cdot}_\chA$: by (iii), we have 
$$\|g|_{W_i+c}\|_{\chA(W_i)} \le  \left(\frac{1-2\epsilon}{4}\right)(m-1),$$
and thus we can apply the induction hypothesis to the restriction of $A$ to $W_i+c$ to describe its full structure. More precisely, the  coset complexity of $A|_{{W_i}+c}$ is at most $\ell_{k,\epsilon}(m-1)$. Since the coset complexity of $\cX$ is at most  $t$, it follows that the coset complexity of $(A\setminus \cX) \cap (W_i+c)$ is at most $\ell_{k,\epsilon}(m-1)+t$.
By applying \cref{lem:LargeCoset}, 
we find a coset $V+a \subseteq W_i+c$ such that
$$|V| \ge 2^{-\ell_{k,\epsilon}(m-1)-t} |(A \setminus \cX) \cap (W_i+c)| \quad \text{and} \quad    V+a \subseteq A \setminus \cX.$$ 
Combining with \cref{eq:AinW}, we have 
$$|V| \ge \frac{2^{-\ell_{k,\epsilon}(m-1)-t}}{t 2^{r+1}} |A \setminus \cX|  \qquad \text{and} \qquad |(A \setminus \cX) \cap (V+a)| = |V|.$$
\end{proof}
Let $\epsilon_1 \defeq 2^{-\ell_{k,\epsilon}(m-1)-t-\log t-k^K} \le  \min(2^{-k^K},\frac{2^{-\ell_{k,\epsilon}(m-1)-t}}{t 2^{r+1}} )$ and $\epsilon_2 \defeq 2^{-k^K}$ so that by \cref{eq:CaseIFinal} and \cref{eq:CaseIIFinal}, in both Case I and Case II, there exists a coset $V+a$ with 
$$|V| \ge  \epsilon_1 |A \setminus \cX|  \qquad \text{and} \qquad |(A \setminus \cX) \cap (V+a)| \geq \epsilon_2 |V|.$$
Now we are in a position to apply~\cref{cor:main} to $A \setminus \cX$.  For $\delta \defeq \min(\frac{1-2\epsilon}{8},\epsilon_2)$, by applying \cref{cor:main}, we find a subgroup $W$ such that 

\begin{itemize}
    \item $\Ex[\1_{A \setminus \cX}|W+c] \le \delta$ or $\Ex[\1_{A \setminus \cX}|W+c] \ge 1-\delta$ for every $c\in G$. 
    \item Since $\|g\|_A \le \|g\|_\chA+1 \le 2m$, the set 
    $$\cF_W = \{c \in W^\perp \  :\  \Ex[\1_{A\setminus \cX}|W+c] \ge 1-\delta \}$$
    satisfies $1 \le |\cF_W| \le  2^{\frac{20m^2}{(1-2\epsilon)^2 \delta}}/{ \epsilon_1}$. Furthermore, since $|A \setminus \cX| \le |A| \le |G|/2$,  we have $\cF_W \neq  W^\perp$.
    
    \item For every $c \in G$, we have 
    \begin{equation}
    \label{eq:corThird}
    \|g|_{W+c}\|_{\chA(W)} \le \|g\|_{\chA}-\frac{1-2\epsilon- 2\delta}{2} \le \|g\|_{\chA}-\frac{1-2\epsilon}{4}. 
    \end{equation}
\end{itemize}

Fix an arbitrary $c_0 \in \cF_W$, and let $\gamma \defeq \frac{2^{-2k}}{t}$. We will focus on $y \in W+c_0$. Recall that $\cX=\bigcup_{i=1}^t (W_i+a_i)$. For $i \in [t]$, define 
$$E_i \defeq \left\{a \in W_i^\perp :  \Pr_{y \in W+{c_0}}[y \in W_i+a_i+a] \ge \gamma \right\}.$$
Since the sets $W_i+a_i+a$ are all disjoint (for different $a \in W_i^\perp$), we have  $|E_i| \le 1/\gamma$.

Now we are ready to set up for the inductive step that will decrease $r$.  Define 
$$\cX'=\cX \cup  (W+\cF_W) \cup (W+\cF_W+c_0)  \cup \bigcup_{i=1}^{t} (W_i+E_i).$$
Note that 
$$\cX'=\bigcup_{i=1}^{t'} (W'_i + a'_i),$$ 
where $W'_i \in \{W_1,\ldots,W_t\} \cup \{W\}$ for all $i \in [t']$, and 
\begin{align}
t' &\le t + 2|\cF_W|+ \sum^t_{i=1} \abs{E_i} \nonumber \\
&\le t\left(1+\frac{1}{\gamma}\right)+\frac{2 \cdot  2^{\frac{20m^2}{(1-2\epsilon)^2 \delta}}}{ \epsilon_1} \nonumber \\
&\le t+t^2 2^{2k} + 2^{1+\frac{20 m^2  }{(1-2\epsilon)^2}\cdot \frac{8}{\epsilon_2(1-2\epsilon)}+\ell_{k,\epsilon}(m-1)+t+\log t+k^K} \nonumber\\
&\le t+t^2 2^{2k} + 2^{1+\frac{160 m^2 2^{k^K} }{(1-2\epsilon)^3}+\ell_{k,\epsilon}(m-1)+t+\log t+k^K} \nonumber\\
&\le 2^{4 \max\set{t,\ell_{k,\epsilon}(m-1)}},\label{eq:boundtprime}
\end{align} 
where we assumed that the $O(1)$ term in the definition of $\ell_{k,\epsilon}$ is chosen so that $\ell_{k,\epsilon}(m-1)$ significantly dominates all the terms that do not involve $t$. 

\begin{claim}
The pair $A$ and $g$ satisfies $\cP'_{k,\epsilon}(m,r-1,t')$ as witnessed by $\cX'$. 
\end{claim}
\begin{proof}
Conditions~(i) and (ii) of \cref{def:PropertyP} are trivially satisfied as $A$ and $g$ are not altered, and $\cX'$ is a union of $t'$ cosets. Condition (iii) is satisfied because either $W'_i \in \{W_1,\ldots,W_t\}$, or $W'_i=W$, and in the latter case   (iii) is satisfied by \cref{eq:corThird}.

It remains to verify (iv). Consider $x_1,\ldots,x_{r-1} \in A\setminus \cX' \subseteq A \setminus \cX$. For every nonempty $S \subseteq [r-1]$, let $a_S=\sum_{i \in S} x_i$.  If neither \ref{item:PropertyP_iva} nor \ref{item:PropertyP_ivb} hold, then 
\begin{alignat}{3}
\label{eq:assumptionI}
&a_S &&\not \in A \setminus \cX' &&\qquad\text{for every $S \subseteq [r-1]$ where  $|S|>1$ is odd;}  \\ 
&a_S &&\not \in \cX' && \qquad\text{for every nonempty $S \subseteq [r-1]$.}  \label{eq:assumptionII}
\end{alignat}
We will establish the existence of an $x_r \in (A\setminus \cX) \cap (W+c_0)$ such that $x_1,\ldots,x_{r-1},x_r$ violate both \ref{item:PropertyP_iva} and \ref{item:PropertyP_ivb} for $A$ and $\cX$, and thus contradict our initial assumption.  Pick $y \in W$ uniformly at random, and set $x_r=y+c_0$. We have the following: 
     
     \begin{itemize}
         \item Since $c_0 \in \cF_W$, we have  $\Pr[x_r \not\in A \setminus \cX]=\Pr_y[y+c_0 \not\in A \setminus \cX] \le \delta$.
         \item For every nonempty even-size $S \subseteq [r-1]$, since $a_S \not\in W+\cF_W+c_0 \subseteq \cX'$, we have $a_S+c_0 \not\in W+\cF_W$, and thus by the definition of $\cF_W$,
         $$\Pr_y[y+c_0+a_S \in A \setminus \cX] \le \delta.$$

         \item  For every nonempty $S$, since $a_S \not\in W_i+E_i \subseteq \cX'$, by applying the union bound over the cosets in $\cX$, we have         
         $$\Pr_y[y+c_0+a_S \in  \cX] \le t \max_{i \in [t]} \Pr_y[y+c_0+a_S \in W_i+a_i]= t \max_{i \in [t]} \Pr_y[y+c_0 \in W_i+a_i+a_S] \le t \gamma.$$
               
     \end{itemize}
     We apply the union bound to the above statements. Since $\delta + 2^{r-1} \delta + 2^{r-1} t \gamma <2^{r}2^{-k^K}+2^{r-1}2^{-2k}<1$,   with positive probability there exists a $y\in W$ such that for $x_r=y+c_0$, 
     \begin{itemize}
         \item $x_r \in A \setminus \cX$. 
         \item For every nonempty even-size $S \subseteq [r-1]$, we have  $x_r + \sum_{i \in S}  x_i  \not\in A \setminus \cX$.

         \item  For every nonempty $S \subseteq [r-1]$, $x_r + \sum_{i \in S}  x_i  \not\in  \cX$. 
     \end{itemize}     
These together with \cref{eq:assumptionI,eq:assumptionII} show that the sequence $x_1,\ldots,x_r$ violates \cref{item:PropertyP_iva,item:PropertyP_ivb} for $A$ and $\cX$, which is a contradiction.
\end{proof}

To finish the proof of \cref{lem:mainLemma}, we can apply the induction hypothesis to $A$ and $\cX'$. By \cref{eq:boundtprime}, the coset complexity of $A$ at most 
\begin{align*}
\ell_{k,\epsilon}(m,r-1,t') &\le  
\ell_{k,\epsilon}(m,r-1,16^{\max\set{t,\ell_{k,\epsilon}(m-1)}}) \\
&= \tower_{16}\left((r-1)+1+\log_{16}^* \max\left\{t,\ell_{k,\epsilon}(m-1)\right\}\right) = \ell_{k,\epsilon}(m,r,t).
\end{align*}
\end{proof}

Finally, we finish the proof of \cref{thm:main}.

\begin{proof}[Proof of \cref{thm:main}] \cref{thm:main}~(i) is an immediate corollary of \cref{lem:Bk}. To prove \cref{thm:main}~(ii), by assumption, $A$ and $A^c$ are $k$-affine connected  and there exists $g:G \to \R$ such that $\|\1_A-g\|_\infty\leq \epsilon$  and $\|g\|_\chA \le M$. Hence, $A$ and $g$ satisfy  $\cP_{k,\epsilon}\left(m\right)$ for  $m=\left\lceil \frac{4M}{1-2\epsilon}\right\rceil$ and by \cref{lem:mainLemma}, the coset complexity of $A$ is at most 
$$\ell_{k,\epsilon}(m)= \tower_2\left(O\left(\frac{Mk}{1-2\epsilon}\right)\right).$$ 
\end{proof}
\section{Concluding remarks and open problems}
We conclude the paper with some suggestions for future research:

\begin{itemize}
    \item Can \cref{thm:main} be extended to all locally compact Abelian groups, or more generally to all locally compact groups? As we discussed in \cref{sec:History}, we believe this is possible.
    
    \item Is the tower type bound in \cref{thm:main} necessary or is it an artifact of our proof? Note that for \cref{thm:Cohen}, Sanders' bound in~\cite{MR3991375} is only exponential in $O(M^{3+o(1)})$.

    \item What can be said about the structure of the sets $A \subseteq G$ that have small approximate algebra norm if we do not assume   affine connectivity? The following conjecture from~\cite{Hambardzumyan2021DimensionfreeBA}  remains open. 
    
    \begin{conjecture}
    \label{conj:LargeAffine}
    If $A \subseteq G$ satisfies $\norm{\1_A}_{A,\epsilon} \le M$, then there is a coset $V+a\subseteq G$ of codimension at most $\ell=O_{M,\epsilon}(1)$ such that $V+a \subseteq A$ or $V+a \subseteq A^c$.
    \end{conjecture}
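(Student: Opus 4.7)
My plan is to combine \cref{thm:main} with an additional argument that extracts a single large coset contained in $A$ or $A^c$, rather than the bounded-complexity ring description that the theorem provides. The first step would be to apply \cref{thm:main} with $k = k(M, \epsilon)$ taken sufficiently large that the theorem's alternative $\|\1_A\|_A \ge \sqrt{k}/2$ cannot hold under our hypothesis. Then in the case that both $A$ and $A^c$ are $k$-affine connected, $A$ has coset complexity at most $\ell = \ell(k, M, \epsilon) = O_{M, \epsilon}(1)$. Applying \cref{lem:LargeCoset} to $A$ or $A^c$ (whichever has density at least $1/2$) would then yield a coset of codimension at most $\ell$ contained in that set, closing this case.

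The remaining case, in which without loss of generality $A$ fails to be $k$-affine connected, is where the real work lies. By \cref{def:connectedness}, there exist vectors $a_0, a_1, \ldots, a_k$ such that $a_0, a_0+a_1, \ldots, a_0+a_k$ are affinely independent members of $A$ and no further affine sum lies in $A$. Within the $k$-dimensional coset $V' = a_0 + \mathrm{span}(a_1, \ldots, a_k)$, the set $A$ looks exactly like a translated copy of $B_k$, so $A^c \cap V'$ contains, for any pair $i \neq j$, the $(k-2)$-dimensional affine subspace spanned by the $a_\ell$ with $\ell \notin \{i, j\}$ and translated by $a_0 + a_i + a_j$. However, this subspace has codimension $n - k + 2$ in $G$, which grows with the ambient dimension, so no global coset of bounded codimension has yet been produced.

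To bridge this gap, I would use the approximate norm hypothesis via \cref{lem:var}, finding a subgroup $W \subseteq G$ of codimension $O_{M, \epsilon}(1)$ on whose cosets the approximant $g$ has small variance, so that each $W$-coset is either essentially contained in $A$ or essentially contained in $A^c$. The hope is that for some pair $(i, j)$, the $W$-coset covering the local $(k-2)$-dimensional affine subspace identified above must be of the ``$A^c$-type''; then a small further refinement --- for instance, intersecting with a bounded number of additional hyperplanes to eliminate the ``almost-but-not-quite-contained'' points --- would produce a coset of codimension $O_{M, \epsilon}(1)$ genuinely inside $A^c$.

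The principal obstacle I expect is precisely this final upgrade from ``almost contained'' to ``fully contained''. \cref{lem:var} only guarantees that $A^c$ occupies a $(1 - o(1))$ fraction of each sparse $W$-coset, and since $\epsilon > 0$ is a fixed parameter in the approximate norm, one cannot iterate the usual Green-Sanders machinery to shrink the exceptional set without exploding the algebra norm, as the $B_k$ example of \cref{lem:Bk} makes essential. A plausible way forward is a carefully controlled iteration that refines $W$ further while tracking exceptional points against codimension growth; whether such an iteration closes off in a bounded number of rounds independent of $n$ is, in my view, the crux of the problem and the reason the conjecture has remained open.
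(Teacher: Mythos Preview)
The statement you are attempting is \emph{Conjecture~\ref{conj:LargeAffine}}, which the paper explicitly lists as open; there is no proof in the paper to compare against. The paper's only remark on it is precisely your case~(ii): if both $A$ and $A^c$ are $k$-affine connected for some $k=O_{M,\epsilon}(1)$, then \cref{thm:main} together with \cref{lem:LargeCoset} yields the desired coset. So that half of your plan matches the paper's observation exactly.

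Your opening sentence, however, is mistaken: you cannot choose $k$ large enough to rule out alternative~(i) of \cref{thm:main}, because the hypothesis $\norm{\1_A}_{A,\epsilon}\le M$ gives no control on $\norm{\1_A}_A$. Indeed, \cref{lem:Bk} shows that $B_k$ satisfies $\norm{\1_{B_k}}_{A,\epsilon}\le 5\log\epsilon^{-1}$ while $\norm{\1_{B_k}}_A\ge\sqrt{k}/2$ for every $k$, so case~(i) is unavoidable. (Your subsequent text treats case~(i) separately anyway, so perhaps this was just an unfortunate phrasing of the case split.)

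For case~(i) your diagnosis is accurate: having a copy of $B_k$ inside $A$ only gives a coset of codimension growing with $n$ inside $A^c$, and \cref{lem:var} only upgrades this to cosets of bounded codimension that are \emph{almost} in $A^c$. The step you flag as the crux---turning ``$(1-\delta)$-contained'' into ``fully contained'' without losing control of codimension---is exactly the missing ingredient, and nothing in your outline supplies it. Your proposed iteration would need a mechanism to prevent the codimension from accumulating with $n$, and you have not indicated one; absent that, the argument does not close. In short, you have correctly located where the conjecture is hard, but the proposal is a sketch of the difficulty rather than a proof.
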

    
    Note that by \cref{thm:main} and \cref{lem:LargeCoset}, such a coset $V+a$ exists if we further assume that $A$ and $A^c$ are $O(1)$-affine connected. 
    
    \item  \cref{thm:main}  belongs to a more general program that aims to characterize the functions that have complexity $O(1)$ in various natural communication and query models. As we discussed earlier, the approximate algebra norm,  randomized parity decision tree complexity, and  randomized communication complexity of the {\sc xor}-lift are exponentially equivalent~\cite{MR3620782,10.1145/3471469.3471479,Hambardzumyan2021DimensionfreeBA}. Therefore, \cref{thm:main} and \cref{conj:LargeAffine} are steps towards achieving such a characterization for the randomized parity decision tree model and the randomized communication complexity of the {\sc xor}-lifts.  
    
    Another possible application of \cref{thm:main} in this program is a potential characterization of the {\sc xor}-lifts with communication complexity $O(1)$ in the unbounded-error model of Paturi and Simon~\cite{paturi1986probabilistic}. We conjecture that those are precisely the {\sc xor}-lifts of the Boolean functions that have coset complexity $O(1)$. We intend to investigate this problem in future works.
    
\end{itemize}

\section*{Acknowledgments} 
The authors wish to thank the reviewers for their careful reading of the paper and constructive feedback. A preprint version of this paper is available at \url{https://eccc.weizmann.ac.il/report/2022/041/}.

\bibliographystyle{amsplain}


\begin{dajauthors}
\begin{authorinfo}[tmc]
Tsun-Ming Cheung\\
School of Computer Science, McGill University\\
Montreal, Canada\\
tsun\imagedot{}ming\imagedot{}cheung\imageat{}mail\imagedot{}mcgill\imagedot{}ca
\end{authorinfo}
\begin{authorinfo}[hamed]
Hamed Hatami\\
Associate Professor\\
School of Computer Science, McGill University\\
Montreal, Canada\\
hatami\imageat{}cs\imagedot{}mcgill\imagedot{}ca
\end{authorinfo}
\begin{authorinfo}[laci]
Rosie Zhao\\
School of Computer Science, McGill University\\
Montreal, Canada\\
rosie\imagedot{}zhao\imageat{}mail\imagedot{}mcgill\imagedot{}ca
\end{authorinfo}
\begin{authorinfo}[andy]
Itai Zilberstein\\
School of Computer Science, McGill University\\
Montreal, Canada\\
itai\imagedot{}zilberstein\imageat{}mail\imagedot{}mcgill\imagedot{}ca
\end{authorinfo}
\end{dajauthors}

\end{document}